\definecolor{ourred}{rgb}{0.90,0.10,0.10}
\newtheorem{prop}{Proposition}
\title{
  \textsc{Characterization of Random Walks\\
  on Space of Unordered Trees\\
  using Efficient Metric Simulation}
}
\author[1]{Farah Ben Naoum}
\author[2]{Christophe Godin}
\author[2,*]{Romain Aza\"is}
\affil[1]{EEDIS Laboratory, Computer Sciences Department, Djillali Liabes University, Sidi Bel-Abbes, Algeria.}
\affil[2]{Laboratoire Reproduction et D\'eveloppement des Plantes, Univ Lyon, ENS de Lyon, UCB Lyon 1, CNRS, INRAE, Inria, F-69342, Lyon, France.}
\affil[*]{Corresponding author: Romain Aza\"is, romain.azais@inria.fr}
\date{}
\DeclareMathOperator*{\argmin}{arg\,min}
\DeclareMathOperator*{\argmax}{arg\,max}
\newcommand{\Root}{\mathcal{R}} % root of a tree
\newcommand{\Leaves}{\mathcal{L}} % set of leaves of a tree
\newcommand{\Children}{\mathcal{C}} % set of children of a vertex (in a given tree)
\newcommand{\Parent}{\mathcal{P}} % parent of a vertex (in a given tree)
\newcommand{\Height}{\mathcal{H}} % height of a vertex (in a given tree) or of a tree
\newcommand{\Outdegree}{\mathcal{D}} % outdegree of a tree
\newcommand{\Strahler}{\mathcal{S}} % strahler number of a tree
\newcommand{\Dist}{\delta} % Zhang distance (between two trees)
\newcommand{\Neighbors}{\mathcal{N}} % set of neighbors of a tree
\newcommand{\Size}{\#} % size or cardinality (of a tree, of a set)
\begin{document}

\maketitle

\begin{abstract}
The simple random walk on $\mathbb{Z}^p$ shows two drastically different behaviours depending on the value of $p$: it is recurrent when $p\in\{1,2\}$ while it escapes (with a rate increasing with $p$) as soon as $p\geq3$. This classical example illustrates that the asymptotic properties of a random walk provides some information on the structure of its state space. This paper aims to explore analogous questions on space made up of combinatorial objects with no algebraic structure. We take as a model for this problem the space of unordered unlabeled rooted trees endowed with Zhang edit distance. To this end, it defines the canonical unbiased random walk on the space of trees and provides an efficient algorithm to evaluate its escape rate. Compared to Zhang algorithm, it is incremental and computes the edit distance along the random walk approximately 100 times faster on trees of size $500$ on average. The escape rate of the random walk on trees is precisely estimated using intensive numerical simulations, out of reasonable reach without the incremental algorithm.
\end{abstract}

\medskip

\section{Introduction}

    A random walk is a discrete trajectory, i.e. a sequence of locations in a space, where each location is obtained by a random elementary move from the previous one.
    When the space is endowed with a graph structure, the canonical random walk evolves along the edges of the graph.
    On a discrete metric space, the natural graph structure puts edges between neighbours, i.e. between elements $x$ and $y$ of $\mathcal{X}$ such that $d(x,y)$ is minimal nonnegative, which can be assumed to be $1$ without loss of generality. In addition, the walk is said isotropic when all the elementary moves share the same probability. In other words, the isotropic random walk on $(\mathcal{X},d)$ is a sequence $(X_h)_{h\geq0}$ of elements of $\mathcal{X}$ such that the conditional probability of $X_{h+1}$ given $X_h=x$ is the uniform distribution on $\{y\in\mathcal{X}\,:\,d(x,y)=1\}$ (see \cite{GOBEL1974311} for finite graphs and \cite[I.1.C. Random walks on graphs]{woess_2000} for locally finite graphs).
    
    \smallskip
    
    The end-to-end distance, also called distance from origin, is defined as $d(X_0,X_h)$ and quantifies the remoteness to the initial value. The asymptotic behaviour of this function of the random walk is often referred to as escape rate and has been investigated in various settings and from different points of view (see \cite{gouezel:hal-01202000} for a theoretical analysis for random walks defined on hyperbolic groups, \cite[II.8.A. The rate of escape]{woess_2000} for a study in a general setting, and \cite{Mazur} in the context of a polymer molecule model).
    
    \smallskip
    
    For example, on $\mathbb{Z}^p$ with $p\geq3$, the random walk (almost surely) goes to infinity, and behaves at first order when $h$ goes to infinity as
    $$\sqrt{\frac{2h}{p}}\frac{\Gamma\left(\frac{p+1}{2}\right)}{\Gamma\left(\frac{p}{2}\right)},$$
    where
    $\Gamma(z) = \int_0^\infty x^{z-1}\exp(-x)dx$. When $p\in\{1,2\}$, the behaviour of the random walk is drastically different: the random walk is recurrent, meaning that it visits its origin location infinitely often with probability $1$. In other words, when $p<3$, the local connectivity of $\mathbb{Z}^p$ is small enough to avoid the escape of the random walk, while when $p\geq3$, the process escapes with the rate above obeying to the following expected property: the larger the dimension (and thus the connectivity), the faster the escape rate.
    
    \smallskip
    
    By definition, the isotropic random walk is a canonical object associated with the structure of the space $\mathcal{X}$, i.e. with its metric through its local connectivity. This example shows that the properties of the random walk, in particular the asymptotic properties, can be strongly related to the structure of the state space (the dimension in the case of the random walk on $\mathbb{Z}^p$ but it could be a more complex dependency). As a consequence, one can build an understanding of the local structure of a discrete metric space through the study of the behaviour of the canonical random walks on it. In this paper, we aim to develop such considerations for unordered rooted trees. It should be noted that this space has no classical algebraic structure, which considerably complicates its study.
    
    \smallskip
    
    One can distinguish two main types of metrics on the space of unordered rooted trees: edit distances and alignment distances, which form a particular case of the former (see the survey \cite{Bille2005} and the references therein). An edit distance is defined as the minimal cost of a sequence of allowed elementary edit operations that transform a tree into another (up to isomorphism). Typically, the edit operations consist in adding a node, deleting a node, and changing the label of a node.  It has been shown that the problem of computing a general edit distance between two unordered labeled trees is NP-hard (see \cite{Bille2005,zhang96}). Taking this into account, one line of research is to develop fixed-parameter polynomial algorithms, which aim to estimate a given parameter assuming that it is fixed in the evaluation of the time-complexity. For instance, \cite{AKUTSU2011352} develops and investigates an exact algorithm for a general tree edit distance that runs in polynomial time but assuming the fact that the target distance is fixed. Another option is to add restrictions to the definition of the distance so that the underlying optimization problem becomes polynomial. \cite{zhang96} has introduced a slight constraint on the set of edit operations which made reachable a polynomial algorithm.
Interestingly, it can be noted that the less restrictive conditions imposed in \cite{6786615} are only valid for ordered trees, while the problem remains NP-hard for unordered trees (see \cite[3.4 Constrained edit distance]{Bille2005}). The induced distance is referred to as Constrained Unordered Tree Edit Distance (CUTED) and is denoted by $\Dist$ in this paper. It is one of the most general  metrics known that remains polynomial without assuming that some parameters are fixed or bounded. CUTED algorithm indicates the original polynomial algorithm developed in \cite{zhang96}.

\smallskip

    The local connectivity of trees endowed with CUTED is non-trivial. First of all, there is a minimal element: the tree composed of a unique node. This tree has a unique neighbour which is composed of two vertices, one being the unique child of the other. And so on, the space of trees is arranged in successive layers: the neighbours of a tree of size $n$ have $n-1$ or $n+1$ nodes, and all the trees of size $n+1$ can be reached from trees of size $n$. There is no closed-form expression for the number of trees of size $n$ (EIS A000081) but it grows fastly as $\lambda\beta^n/n^{3/2}$ with $\lambda\simeq0.43992$ and $\beta\simeq2.48325$ \cite{FlajoletSedgewick}. In addition, from a tree with $n$ nodes, there are from $n$ to $2n$ ways to construct a bigger tree that remains at distance $1$, while the number of nodes that can be deleted in CUTED is always less than $n$. However, some of these editions result in the same tree. The local connectivity is hard to describe because it is highly dependent on the considered tree, even at fixed size: the fishbone tree with $n$ nodes has $\Theta(3n/2)$ neighbours of size $n+1$ while the rake tree with $n$ nodes has only $2$ neighbours of size $n+1$. This draws a complex funnel-like structure, which first 5 layers are represented in Fig.\,\ref{fig:ex:walk}. This being established and recalling the asymptotic behaviour of random walks on $\mathbb{Z}^p$ for which the number of neighbours of any state is $2p$, one can expect a random walk on trees to escape fastly from its starting point. Nevertheless, the form of the escape rate function is far from obvious. In this paper, we investigate, from a numerical point of view, the escape rate of the canonical random walk on trees endowed with CUTED.
    
    \smallskip
    
    A natural way to proceed is to generate numerically a large number $N$ of random walks within a large time window $[0,H]$ in order to estimate accurately the average escape rate, then evaluate its dependency on the number of elementary moves. In other words, we need to simulate $N$ independent random walks on trees $(X_h^i)_{0\leq h\leq H}$, $1\leq i\leq N$, and keep track of the end-to-end distance processes $(\Dist(X_0^i,X_h^i))_{0\leq h\leq H}$. Consequently, $N\times H$ edit distances need to be evaluated. In light of the quadratic complexity of CUTED algorithm \cite{zhang96}, this approach clearly requires intensive computing resources, making it impractical as soon as the trees get large.
    
    \smallskip
    
    In the present paper, we remark that $X_{h+1}^i$ is not completely independent of $X_{h}^i$ but obtained from it by one of the elementary moves allowed in CUTED. In this context, we show that the edit distance $\Dist(X_0^i,X_{h+1}^i)$ does not require to process CUTED algorithm in full, but can be evaluated fastly from $\Dist(X_0^i,X_h^i)$ (and the computations made to evaluate it) and the knowledge of the edit operation from $X_h^i$ to $X_{h+1}^i$. More precisely, we exhibit an incremental algorithm for CUTED, which achieves much shorter computation times than the original version: for instance, the computation time is divided by almost $100$ for trees of size $500$. Applying this idea from the initial value of the random walk, one can keep track of the distance process without any intensive computations. This notably makes possible to estimate accurately the escape rate function of the canonical random walk on trees on a commercial laptop ($90\text{M}$ distances on trees up to several hundreds of nodes were evaluated to write this paper).
    
    The paper is organized as follows. Section~\ref{s:rw} is devoted to precise definitions of the concepts of interest: trees in Subsection~\ref{ss:def:trees}, CUTED in Subsection~\ref{ss:zhang:operations}, and balanced random walk on trees in Subsection~\ref{ss:isorw}. Our incremental algorithm for CUTED is developed in Section~\ref{s:effcomp}, while its pseudo-code is provided in Appendix~\ref{pseudo-alg}. A simulation study is presented in Section~\ref{s:simu}: improvement in computation times are given in Subsection~\ref{ss:comptimes} and Subsection~\ref{ss:sharpesc} presents our investigations on the sharp estimation of the escape rate of random walks on trees. Other aspects of the work that can be of interest are discussed in Section~\ref{s:discussion}.

\section{Random walk on trees endowed with CUTED}
\label{s:rw}

\subsection{Definition of unordered trees}
\label{ss:def:trees}

\paragraph{Rooted trees} A rooted tree $T$ is a connected acyclic digraph such that there is a unique vertex $\Root(T)$ (the root of $T$) that has no parent, and any vertex different from the root has exactly one parent. The parent of a vertex $w$ is denoted $\Parent(w)$ and the set of all its ancestors by $\Parent^+(w)$, while all the nodes that have $v$ for parent form the set of children of $v$, denoted $\Children(v)$. $\Leaves(T)$ is the set of leaves of $T$ that are all the vertices without children. The tree without any vertex represents the empty tree and is denoted $\Theta$. The subtree $T[v]$ rooted in $v$ is the subgraph of $T$ composed of $v$ and all its descendants in $T$ with edges inheriting from $T$. $F[v]$ denotes the forest of subtrees emanating from $v$, i.e. the set of trees $T[w]$ where $w\in\Children(v)$.

\paragraph{Unordered trees} In this paper, we consider unordered rooted trees for which the left-to-right order among sibling vertices is not significant. Unordered rooted trees, simply called unordered trees or trees in the sequel, are defined from the definition of tree isomorphism. A one-to-one correspondence $\Phi:T\to T'$ is called a tree isomorphism if $w$ is a child of $v$ in $T$ implies that $\Phi(w)$ is a child of $\Phi(v)$ in $T'$. $T$ and $T'$ are said isomorphic whenever there exists a tree isomorphism between them.
The existence of a tree isomorphism defines an equivalence relation on the set of rooted trees. The class of unordered rooted trees is the quotient set of rooted trees by the existence of a tree isomorphism. Remarkably, one can determine if two trees of size $n$ are isomorphic, which the most elementary operation required to handle unordered trees, in $O(n)$ \cite[Example 3.2 and Theorem 3.3]{Aho:1974:DAC:578775}.

\paragraph{Characteristics of trees} A tree $T$ can be described by several integer-valued characteristics that will be used in this paper, namely:
\begin{itemize}
    \item its size $\Size T$ that counts the number of vertices of $T$;
    \item its height $\Height(T)$ that is the length of the longest path from the root to the leaves;
    \item its outdegree $\Outdegree(T)$ that is the maximum number of children that can be found in $T$;
    \item its Strahler number, recursively defined on vertices $v$ of $T$ as $\Strahler(v) = 1$ if $v\in\Leaves(T)$, and
    $$ \Strahler(v) = \max_{c\in\Children(v)} \Strahler(c) +
    \left\{
    \begin{array}{cll}
    0 & \text{if} & \Size[\argmax_{c\in\Children(v)} \Strahler(c)] = 1,\\
    1 & \text{if} & \Size[\argmax_{c\in\Children(v)} \Strahler(c)] \geq2,
    \end{array}
    \right.
    $$
    else, the Strahler number of $T$ being defined as the Strahler number of its root.
\end{itemize}

\subsection{Definition of CUTED from edit operations} \label{ss:zhang:operations}

Editing distances consist in evaluating the number of elementary operations transforming one tree in another. General editing operations for unordered trees were introduced in \cite{10.1016/0020-0190(92)90136-J}, but this paper also shows that computing the induced distance is NP-complete \cite[Theorem 11]{10.1016/0020-0190(92)90136-J}. In \cite{zhang96}, the set of editing operations is slightly constrained so that a polynomial algorithm for computing the induced distance, refered to as CUTED in this paper, is exhibited. CUTED is defined from the following set of edit operations of a tree $T$.
\begin{itemize}
    \item Add leaf under $u$: add a new vertex $k$ in the set of children of $u$ such that $k$ is a leaf.
    \item Del leaf $k$: remove leaf $k$ from the set of children of its parent.
    \item Add internal node under $u$: replace the set of children of $u$ by a unique vertex $k$ so that the set of children of $k$ is the previous set of children of $u$.
    \item Del internal node $k$ (only if $k$ is the unique child of its parent $u$): remove $k$ so that the set of children of $k$ becomes the children of $u$.
\end{itemize}
For the sake of clarity, these 4 operations are presented in Fig.\,\ref{fig:editing:operations}.

\begin{figure}[h]
\centering
\subcaptionbox{\label{fig:editing:operations-a}}{\includegraphics[width=4.5cm]{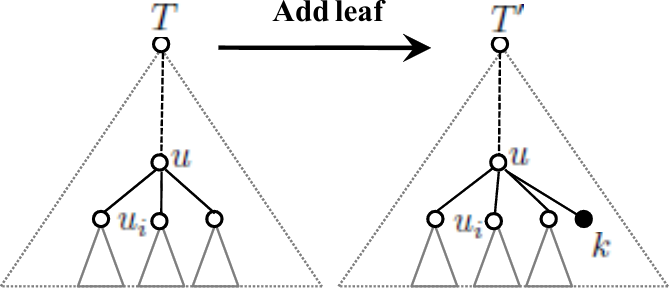}}
\hspace{2cm}\subcaptionbox{\label{fig:editing:operations-b}}{\includegraphics[width=4.5cm]{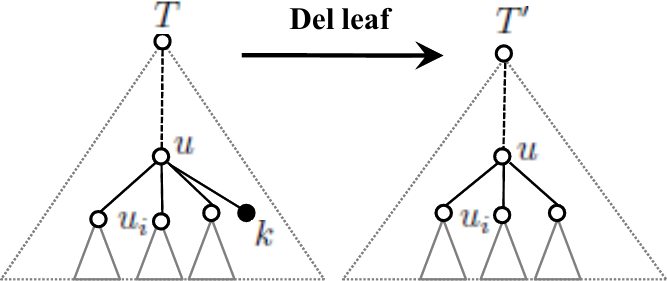}}\\
\subcaptionbox{\label{fig:editing:operations-c}}{\includegraphics[width=4.5cm]{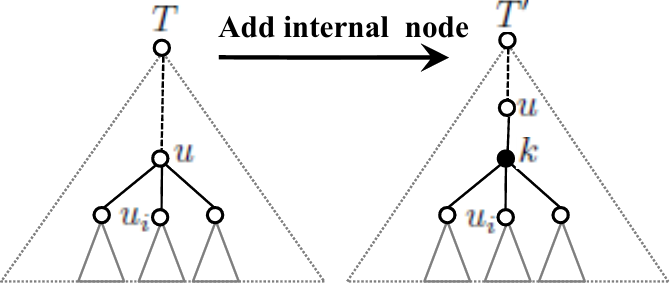}}
\hspace{2cm}\subcaptionbox{\label{fig:editing:operations-d}}{\includegraphics[width=4.5cm]{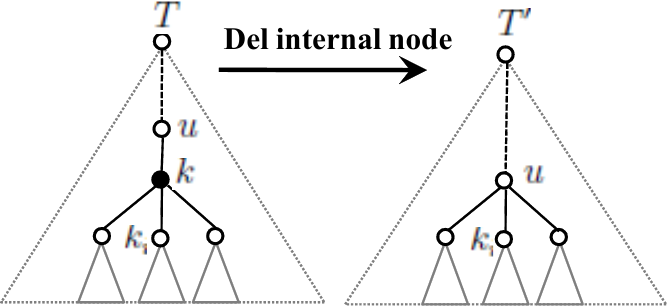}}
\caption{Elementary tree editing operations transforming a tree $T$ into an other tree $T'$, in such edition either:  (a) a new leaf $k$ is inserted under vertex $u$, (b) leaf $k$ of a parent $u$ in $T$ is deleted from $T'$, (c) a new vertex $k$ is added under the vertex $u$  such that $k$ becomes the unique child of $u$ and the children set of $k$ is the previous $u$ children set, (d) $k$ is the unique child of its parent $u$ then remove $k$ so that the  $k$ children set becomes the children set of $u$.
\label{fig:editing:operations}}
\end{figure}

The tree obtained after an edit operation $o$ applied to a tree $T$ is denoted $o(T)$. An edit operation sequence is an ordered list of edit operations $s = (o_n,\dots,o_1)$ applicable to $T$, i.e. such that, for any $1\leq i\leq n-1$, operation $o_{i+1}$ makes sense to be applied to $o_i\circ\dots\circ o_1(T)$.

\smallskip

The edit distance between two trees $T$ and $T'$ is the minimal length of edit operation sequences transforming $T$ into a tree isomorphic to $T'$, i.e.
\begin{equation}\label{eq:def:tree dist}
 \Dist(T,T') = \min_{s(T)=T'} \Size s .
\end{equation}
$\Dist$ defines a distance metric on the space of unordered trees \cite[2.1.\,Editing operations and editing distance between unordered labeled trees]{10.1016/0020-0190(92)90136-J}.

\subsection{Balanced random walk on tree space}
\label{ss:isorw}

A Markov chain on a discrete space $\mathcal{X}$ is defined from a transition kernel $Q$, that gives, for any $(x,y)\in\mathcal{X}^2$, the probability $Q(x,y)$ to go from $x$ to $y$ in one step. If the space $\mathcal{X}$ is augmented with a graph structure, i.e. if a set $\mathcal{E}$ of (undirected) edges on $\mathcal{X}$ is given, a random walk is defined as a Markov chain such that $Q(x,y)>0$ if and only if $(x,y)\in\mathcal{E}$, i.e. if $y\in\Neighbors(x)$ where $\Neighbors(x)$ is the set of neighbors of $x$,
$$\Neighbors(x) = \{y\in\mathcal{X}~:~(x,y)\in\mathcal{E}\}.$$
(see \cite[I.1.C. Random walks on graphs]{woess_2000} where these random walks are referred to as of nearest neighbor type). The transitions of an isotropic random walk, also called simple random walk, should not privilege one specific direction: in general, one expects $Q(x,\cdot)$ to be the uniform distribution on the set $\Neighbors(x)$ (supposed to be finite for any $x$), i.e.
$$Q(x,y) =
\left\{
\begin{array}{cl}
\frac{1}{\Size\Neighbors(x)} & \text{if}~y\in\Neighbors(x),\\
0 & \text{else,}
\end{array}
\right.
$$
(see \cite{GOBEL1974311} for finite graphs and \cite[eq.\,(1.19)]{woess_2000} for locally finite graphs).

\smallskip

CUTED induces a natural locally finite graph structure on the set of unordered trees, for which two trees form an edge if they are at distance $1$. In this case, the set $\Neighbors(T)$ of neighbors of $T$ is given by
\begin{eqnarray*}
\Neighbors(T)&=& \{T'~:~\Dist(T,T') = 1\},\\
&=& \{o(T)~:~o\in\mathcal{O}\},
\end{eqnarray*}
where $\mathcal{O}$ denotes the set of all possible operations applicable to $T$. This set can be decomposed as
$$\mathcal{O} = \mathcal{O}_{\text{AL}} \cup \mathcal{O}_{\text{DL}} \cup \mathcal{O}_{\text{AIN}} \cup \mathcal{O}_{\text{DIN}},$$
where
\begin{itemize}
    \item $\mathcal{O}_{\text{AL}}$ stands for the set of add leaf operations, which cardinality is $\Size T$;
    \item $\mathcal{O}_{\text{DL}}$ stands for the set of del leaf operations, which cardinality is $\Size\Leaves(T)$;
    \item $\mathcal{O}_{\text{AIN}}$ stands for the set of add internal node operations, which cardinality is $\Size T-\Size\Leaves(T)$;
    \item $\mathcal{O}_{\text{DIN}}$ stands for the set of del internal node operations, which cardinality is the number of single children that are not leaves, which is bounded by $\Size T-\Size\Leaves(T)-1$.
\end{itemize}
Consequently, the number of operations that make the tree increase, $2\Size T-\Size\Leaves(T)$, is much larger than the number of operations that make the tree decrease, roughly upper-bounded by $\Size T-1$. Thus, choosing a random neighbor of $T$ by selecting an edit operation with uniform distribution on $\mathcal{O}$ introduces a bias that tends to augment the size of the tree, which is not the expected behavior of a balanced random walk.

\smallskip

In order to avoid this bias, we propose to select with probability $1/2$ an adding operation, chosen with uniform distribution on $\mathcal{O}_{\text{AL}}\cup\mathcal{O}_{\text{AIN}}$, and with probability $1/2$ a deleting operation, chosen with uniform distribution on $\mathcal{O}_{\text{DL}}\cup\mathcal{O}_{\text{DIN}}$, which leads to
\begin{equation}\label{eq:def:Q}
Q(T,T') =
\left\{
\begin{array}{cl}
\frac{1}{2(\Size\mathcal{O}_\text{AL}+\Size\mathcal{O}_\text{AIN})} & \text{if}~T'=o(T)~\text{with}~o\in\mathcal{O}_\text{AL}\cup\mathcal{O}_\text{AIN} ,\\
\frac{1}{2(\Size\mathcal{O}_\text{DL}+\Size\mathcal{O}_\text{DIN})} &\text{if}~T'=o(T)~\text{with}~o\in\mathcal{O}_\text{DL}\cup\mathcal{O}_\text{DIN} ,\\
0 & \text{else.}
\end{array}
\right.
\end{equation}
In the sequel, $(T_h)_{h\geq0}$ denotes the balanced random walk defined from the transition kernel $Q$ given in \eqref{eq:def:Q}, i.e.
$$\forall\,h\geq0,~\mathbb{P}(T_{h+1} = y | T_h = x) = Q(x,y).$$
The first layers of the state space as well as a sample trajectory are presented in Fig.\,\ref{fig:ex:walk}. The isotropic random walk, that assigns the same probability to all the neighbours, is discussed in Subsection~\ref{ss:isotropic}.

\begin{figure}[t]
\centering\includegraphics[scale=0.65]{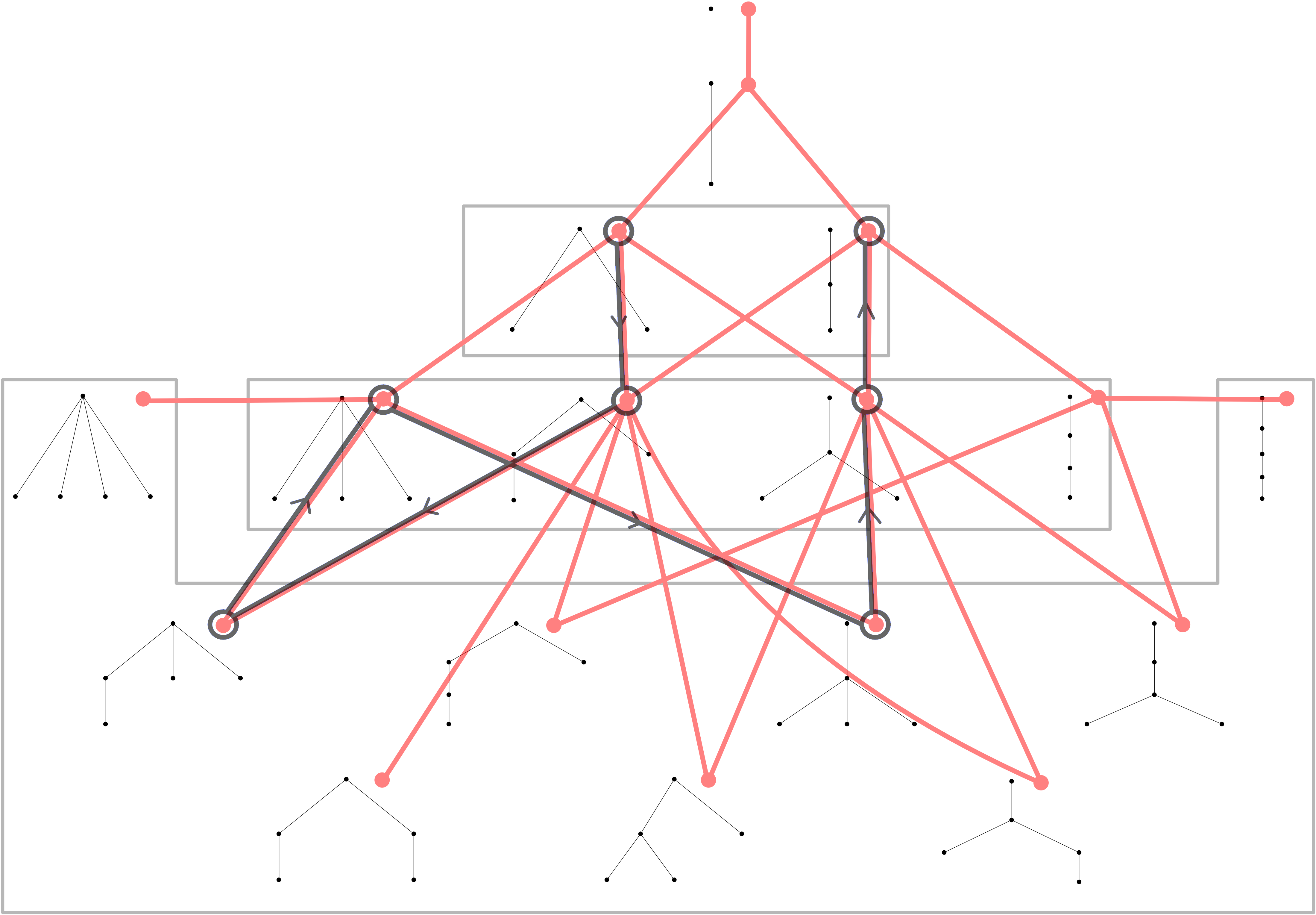}
\caption{Graph structure induced by CUTED on the $17$ unordered rooted trees with at most $5$ nodes and the first steps of a sample random walk on it starting from a tree with $3$ nodes.}
\label{fig:ex:walk}
\end{figure}

\smallskip

The behavior of $(T_h)_{h\geq0}$ can be described through integer-valued characteristics of trees, namely the size $(\Size T_h)_{h\geq0}$, the height $(\Height(T_h))_{h\geq0}$, the outdegree $(\Outdegree(T_h))_{h\geq0}$, or the Strahler number $(\Strahler(T_h))_{h\geq0}$ (see Subsection~\ref{ss:def:trees} for definition of these quantities). However, one important feature of random walks is the end-to-end distance (see \cite{gouezel:hal-01202000,Mazur,woess_2000} for various contexts and objectives), which evaluates the remoteness to the initial value, given in the framework of CUTED by $(\Dist(T_0,T_h))_{h\geq0}$.

\section{Efficient computation of the tree edit distance in a random walk context}
\label{s:effcomp}

During a random walk, our goal is to analyze the behaviour of the above tree-edit distance $\Dist (T_{0},T_h)$ given in \eqref{eq:def:tree dist} between the origin tree and the current tree. For unordered trees, this computation can be carried out using the \emph{dynamic programming} principle in a recursive and bottom-up manner, from the leaves of the trees to their roots \cite{zhang96}. In this section, we will show how this computation can be carried out efficiently, in an incremental manner during the random walk.

\subsection{Recursive computation of tree edit distance}

\cite{zhang96} introduced an efficient algorithm to compute edit distances between unordered trees (here corresponding to $T_0$ and $T_h$), assuming simple edit operations (see Subsection~\ref{ss:zhang:operations}). This algorithm relies on the following set of recursive equations that proceed bottom up from the leaves to the tree root,
\begin{equation}
\begin{aligned}
\Dist(v,w) &=\min\begin{cases}
\Dist(\Theta,w)+\underset{w_{i}\in \Children(w)}{\min}\left\{ \Dist(v,w_{i})-\Dist(\Theta,w_{i})\right\},\\
\Dist(v,\Theta)+\underset{v_{i}\in \Children(v)}{\min}\left\{ \Dist(v_{i},w)-\Dist(v_{i},\Theta)\right\},\\
d(v,w)+\Dist_F(v,w),
\end{cases} \\
\Dist_F(v,w) &=\min\begin{cases}
\Dist_F(\Theta,w)+\underset{w_{i}\in \Children(w)}{\min}\left\{ \Dist_F(v,w_{i})-\Dist_F(\Theta,w_{i})\right\},\\
\Dist_F(v,\Theta)+\underset{v_{i}\in \Children(v)}{\min}\left\{ \Dist_F(v_{i},w)-\Dist_F(v_{i},\Theta)\right\},\\
\Dist_{FM}(v,w)=\underset{(v_{i},w_{j})\in M^\ast(v,w)}{\sum}\Dist(v_{i},w_{j}),
\end{cases}
\end{aligned}
\label{recursive-equations}
\end{equation}
where $\Dist(v,w)$ and $\Dist_F(v,w)$
respectively represent the distance between subtrees $T_0[v]$ and $T_h[w]$ and the distance between the forests $F_{0}[v]$ and $F_h[w]$ (see Fig.\,\ref{fig:trees T1 and T2 with partial solutions}). $\Dist_{FM}(v,w)$ is the cost of $M^\ast(v,w)\subseteq \Children(v) \times \Children(w)$ the \emph{optimal forest mapping} rooted in $v$ and $w$. For this define $\mathcal{M}(v,w)$ the set of all mappings between $F[v]$ and $F[w]$. Based on the above edit operations, the only valid mappings between forests must verify \cite{zhang96},
$$
    \forall\,(x,y)\in M(v,w),~\forall\,(x',y')\in M(v,w),~x=x' \,\Leftrightarrow\,y=y'.
$$

\noindent
An \emph{optimal forest mapping} $M^{*}$ between $F_0[v]$ and $F_h[w]$ must satisfy,
$$
    M^{\ast}(v,w)=
    \argmin_{M(v,w) \in \mathcal{M}(v,w)} \left( \sum_{(x,y)\in M(v,w)} \Dist(x,y)+\sum_{x\in M^{-}(v,w)}\Dist(x,\Theta)+\sum_{y\in M^{+}(v,w)}\Dist(\Theta,y)\right).
$$
$M^{-}(v,w)$ and $M^{+}(v,w)$ represent respectively the subsets of $\Children(v)$ and $\Children(w)$ which are not included in the mapping $M(v,w)$.

\begin{figure}[t]
\centering\includegraphics[scale=0.7]{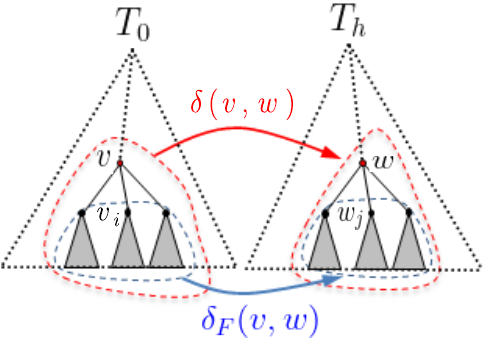}
\caption{Intermediate distances involved in computation of the mapping distance of trees $T_{0}$ and $T_h$. Dashed lines surround the elements to be mapped for each pair $(v,w)$: $\Dist(v,w)$ is the mapping distance between the subtrees $T_0[v]$ and $T_h[w]$, and $\Dist_F(v,w)$ is the mapping forest distance between $F_0[v]$ and $F_h[w]$.
\label{fig:trees T1 and T2 with partial solutions}}
\end{figure}

\smallskip

\cite{zhang96} modeled this problem as a minimum cost maximum bipartite matching problem. A weighted bipartite graph $G$ is constructed, made of sets of vertices: a first set $V_0$ represents the trees of the forest $F_0[v]$ and a second set $V_h$ represents the trees of $F_h[w]$. In addition, if the two forests have different sizes, a vertex $\theta$, representing the empty tree, is added to the set with lower number of vertices. An edge is created between each pair of vertices in $V_0 \times V_h$, and is attached a weight $\gamma_{ij}$ corresponding to the tree edit distance between the subtrees $T_0[v_i]$ and $T_h[w_j]$ represented by the two connected vertices in $G$. Two nodes representing the source $s$ and the sink $t$ are added to $G$ (see Fig.\,\ref{fig:matching graphs}). Using this graph, the above forest mapping problem is solved as the \emph{minimum cost maximum flow} (MCF) problem on $G$ \cite{kiraly-kovacs2012}. 

\begin{figure}[th]
\centering\includegraphics[scale=0.7]{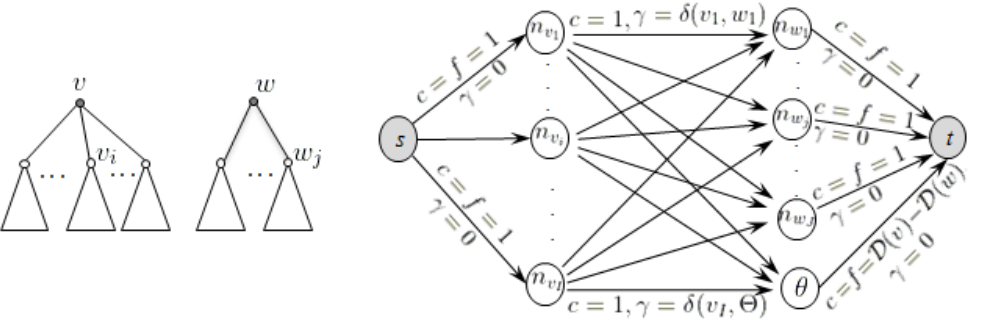}
\caption{Two forests $F_0[v]$ and $F_h[w]$ (left) of mapping cost computed from the associated matching graph (right), where for each edge the triplet $(f,c,\gamma)$ denotes the flow, the capacity and the cost. Note that for this example $\Outdegree(v)>\Outdegree(w)$.
\label{fig:matching graphs}}
\end{figure}

\paragraph{MCF problem \cite{kiraly-kovacs2012}}

Let $G=(V,E)$ be a weighted connected directed graph. Each edge $(i,j) \in E$ is associated with 2 non-negative integers $f_{ij}$ and $c_{ij}$ representing respectively the edge \textit{flow} and \textit{capacity}, and a real $\gamma_{ij}$ defining the \textit{cost} per unit flow $f_{ij}$ on the arc. The \emph{cost of a flow} $f$ on $G$ is defined as
$$\gamma(f)=\sum_{(i,j)\in E}\gamma_{ij}f_{ij}.$$ 

In addition, we assume that there exist two nodes in $G$, denoted $s$ and $t$, respectively called the source and target nodes such that $s$ has no entering edge, and $t$ has no outgoing edge. All the other nodes are called \textit{transit} nodes.

\smallskip

A flow $f$ is \emph{feasible}, if and only if it meets the following constraints: 
\begin{enumerate}
\item Capacity: for any $(i,j)\in E$, $0\leq f_{ij}\leq c_{ij}$.
\item Flow conservation: for any transit node $i$, $\underset{(i,j)\in E}{\sum}f_{ij}=\underset{(j,i)\in E}{\sum}f_{ji}$.
\item Supply-demand conservation: the source supply value $\underset{(s,j)\in E}{\sum}f_{sj}$ is equal to the sink demand value $\underset{(i,t)\in E}{\sum}f_{it}$.
\end{enumerate}

A Minimum Cost Maximum Flow (MCF) problem consists in finding an \textit{optimal} flow $f^*$ representing the feasible maximal flow on the graph which minimizes the flow cost.

\smallskip

The search of the optimal forest mapping from $F_{0}[v]$ to $F_h[w]$ thus corresponds to solving the MCF problem associated with the graph $G$ encoding forests $F_{0}[v]$ and $F_h[w]$. 
Different algorithms can be used for the MCF resolution, such as \cite{Edmons-Karp72,Edmons-karp2003},  Network Simplex \cite{kiraly-kovacs2012, ahuja2017}, and the Cost-Scaling algorithm \cite{Tarjan1990}. In \cite{zhang96}, the former algorithm, improved by Tarjan \cite{tarjan1983}, is used, which induces the final time complexity of the overall algorithm $O(\Size T_{0} \times \Size T_{h} \times (\Outdegree(T_{0})+ \Outdegree(T_{h}))\times \log_{2}(\Outdegree(T_{0})+ \Outdegree(T_{h}))$.

\subsection{Incremental tree edit distance}
\label{ss:inczhang}

The above set of recursive equations \eqref{recursive-equations} makes it possible to compute the tree edit distance between 2 trees in a time essentially polynomial in the product of the input tree sizes. However, during a random walk $(T_{h})_{h\geq 0}$, performing this computation at each step between the current tree $T_h$ and the original tree $T_{0}$ would be prohibitive. In this section, we show that this computational cost can be drastically reduced by making use of the fact that each tree $T_h$ is followed by a tree $T_{h+1}$ that results from an elementary edit operation made on $T_h$. 

\smallskip

As we have seen above, such an elementary operation can either be a vertex insertion or deletion in the tree $T_h$ (see Fig.\,\ref{fig:editing:operations} in which $T$ and $T'$ are here respectively referenced as trees $T_h$ and $T_{h+1}$). We remark that because the tree $T_{h+1}$ differs only by one vertex from the tree $T_h$, the recursive computation of $\Dist(T_0,T_{h+1})$ will differ only slightly from the recursive computation of $\Dist(T_0,T_{h})$: most of the intermediate distances computed during the recursion are actually the same for a majority of subtrees not affected by the edit operation. In the sequel, we make use of this property to reduce the complexity of the computation of $\Dist(T_0,T_{h+1})$, by updating in \eqref{recursive-equations} only the distances of $\Dist(T_0,T_{h})$ impacted by the edit operation at step $h+1$.

\smallskip

In the computation of the distance between two trees, we note that a number of intermediate quantities, $\Dist(v,w)$, $\Dist(v,\Theta)$, $\Dist(\Theta,w)$, $\Dist_F(v,w)$, $\Dist_F(v,\Theta)$, $\Dist_F(\Theta,w)$ and $M^\ast(v,w)$, are recursively computed for each pair $(v,w)$ during the algorithm following \eqref{recursive-equations}. These quantities define the \textit{distance information at} $(v,w)$.

\smallskip

Let us assume that we already computed the distance between $T_0$ and $T_{h}$ using \eqref{recursive-equations} and that, at the next step, a vertex $k$ is either inserted in or deleted from the tree $T_h$ to produce $T_{h+1}$. 
Then, in the recursion of \eqref{recursive-equations} not all the intermediate quantities need be recomputed to compute $\Dist(T_0,T_{h+1})$. Indeed, all the distances $\Dist(v,w)$ for $w$ not an ancestor of $k$, are left unchanged by the edit operation changing $T_h$ in $T_{h+1}$. Only the distances $\Dist(v,w)$ for $w\in \Parent^+(k)$ actually need to be re-evaluated (see Fig.\,\ref{fig:mod-cases}).

\begin{figure}[th]
\centering\includegraphics[scale=0.5]{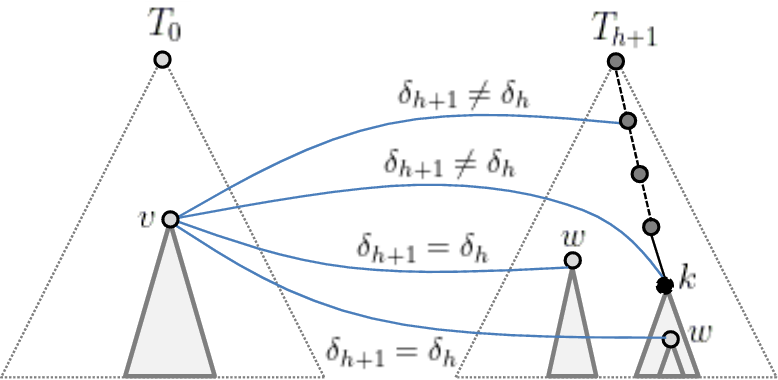}
\caption{Tree $T_{h+1}$ in which a node $k$ is edited (just inserted or about to be deleted). In step $h+1$, the only distance informations to be updated are for the vertex $k$ and for all $k$ ancestors, while for all remaining $T_{h+1}$ vertices the distance information is preserved compared with that computed in step $h$.
\label{fig:mod-cases}}
\end{figure}

\smallskip

How the distance information at $(v,w)$ used to compute $\Dist(T_0,T_{h})$ must be updated at step $h+1$ to compute $\Dist(T_0,T_{h+1})$ depends on the type of edit operation (see Fig.\,\ref{fig:editing:operations}) and on the position of the vertex $w$ with respect to the edited vertex $k$ (see Fig.\,\ref{fig:mod-cases}). Let us consider the different situations leading to different updating strategies.

\paragraph{Leaf insertion (see Fig.\,\ref{fig:editing:operations-a})}
$T_{h+1}$ results from the insertion of a leaf $k$ in $T_{h}$. Then, for all $v$ in $T_{0}$ and for every vertex $w$ in $T_{h+1}$, most of the distance information at $(v,w)$ remains unchanged compared with what was computed at the previous step, except in three cases:
    \begin{enumerate}[label=\roman*.]
    \item $w=k$: the distance information at $(v,w)$ is set to that of $(v,l)$, where $l$ is any other leaf in $T_{h}$.  
    
   \item $w=\Parent(k)$: at step $h+1$, the forest $F_{h+1}[w]$ is made of the forest $F_h[w]$  augmented with the subtree $T_{h+1}[k]$ (here reduced to only one leaf vertex). This means that the distance information at $(v,w)$ at step $h+1$ must be computed using \eqref{recursive-equations} to account for the fact that (see Fig.\,\ref{fig:changes-all-cases-a}):  
        \begin{itemize}
        \item The distance information at $(v,k)$ is new.
        
        \item For every $v_i\in \Children(v)$, the distance informations at $(v_i,w)$ and at $(\Theta,w)$ have potentially changed due to the leaf insertion.
        
        \item The optimal forest mapping $M^\ast(v,w)$ obtained in step $h$ is no longer valid at step $h+1$. We have to integrate the new subtree $T_{h+1}[k]$ in the mapping. See problem B in paragraph \textit{Re-evaluation of forest mappings} below.
        \end{itemize}

    \item $w$ is an ancestor of $\Parent (k)$: at step $h+1$, the forest $F_h[w]$ undergoes an edition without affecting the $w$ children list. Therefore, the distance information at $(v,w)$ at step $h+1$ must be recomputed using \eqref{recursive-equations} to account for the following changes with respect to the computation at step $h$ (see Fig.\,\ref{fig:changes-all-cases-b}):
    
        \begin{itemize}
        \item The distance information at $(v_i,w)$ for every $v_i\in \Children(v)$, the distance information at $(v,w_j)$ for $w_j\in\Parent^+(k)$, and the distance information at $(\Theta,w)$ have potentially changed due to the insertion.
    
        \item Here the forest mapping at step $h$ remains a valid mapping at step $h+1$ but should be further optimized to account for potential changes in the distance between subtrees. See problem A in paragraph \textit{Re-evaluation of forest mappings} below.
        \end{itemize} 
    
    \end{enumerate}

\paragraph{Leaf deletion (see Fig.\,\ref{fig:editing:operations-b})}
The leaf $k$ has just been deleted from $T_{h}$, and the only tree distance informations that are affected by this deletion are for the two cases:

    \begin{enumerate}[label=\roman*.]
    \item $w=\Parent(k)$: at step $h+1$ the subtree $T_{h}[k]$ (here reduced to only one leaf vertex) is substracted from the forest $F_h[w]$. The distance informations at $(v,w)$ in step $h+1$ must be recomputed using \eqref{recursive-equations} to account for the following changes with respect to the distance computation in step $h$ (see Fig.\,\ref{fig:changes-all-cases-c}):
        \begin{itemize}
        
        \item The previous distance informations at $(v,k)$ are excluded from the computation.
        
        \item For every $v_i\in \Children (v)$, the distance informations at $(v_i,w)$ and at $(\Theta,w)$ have potentially changed due to the deletion.
    
        \item The optimal forest mapping $M^\ast(v,w)$ computed in step $h$ is no longer valid at step $h+1$, since the subtree $T_{h}[k]$ has to be excluded from the mapping. See problem B in paragraph \textit{Re-evaluation of forest mappings} below.
        \end{itemize}

    \item $w$ is an ancestor of $\Parent (k)$: the distance information needs to be updated as in (iii) of leaf insertion (see above).
    \end{enumerate}

\paragraph{Internal node insertion (see Fig.\,\ref{fig:editing:operations-c})}
$T_{h+1}$ results from the insertion of an internal vertex  $k$ in $T_{h}$. Then, the only tree distance informations that are affected by this insertion are as follows:

    \begin{enumerate}[label=\roman*.]
    \item $w=k$: let us denote $u$ the vertex of $T_{h}$ under which the insertion takes place (see Fig.\,\ref{fig:editing:operations-c}). In the new tree $T_{h+1}$, the insertion of an internal vertex $w$ under $u$ means that, by definition, the forests $F_h[u]$ and $F_{h+1}[w]$ are identical. As a consequence, the distance information at step $h+1$ at $(v,w)$ is equal to that previously obtained for $(v,u)$ at step $h$ and does not need to be re-evaluated. 
    
    \item $w=\Parent(k)$: if $k$ is an internal vertex inserted under $w$ in $T_{h+1}$, a first step of the edition at level of vertex $w$ is to empty the forest $F_{h+1}[w]$ before adding to it the subtree $T_{h+1}[k]$. Meaning that, the distance information at $(v,w)$ is at first set to that of $(v,l)$, $l$ any leaf in $T_h$. Next, to deal with the insertion of the subtree $T_{h+1}[k]$ in the temporarily empty forest $F_{h+1}[w]$, the obtained distance information at $(v,w)$ is re-evaluated in a similar way as for case (ii) of leaf insertion (see above).
    
    \item $w$ is an ancestor of $\Parent (k)$: see above leaf insertion (iii).
    \end{enumerate} 
 
\paragraph{Internal node deletion (see Fig.\,\ref{fig:editing:operations-d})}

Internal vertex $k$ has just been deleted from $T_{h}$, and the only tree distance informations that are affected by this deletion are:

    \begin{enumerate}[label=\roman*.]
    \item $w=\Parent(k)$: if $k$ is an internal vertex deleted under $w$, then the forests $F_{h+1}[w]$ and $F_h[k]$ are identical, and the distance information at $(v,w)$ in step $h+1$ is equal to the distance information at $(v,k)$ of step $h$ and does not need to be re-evaluated.
    
    \item $w$ is an ancestor of $\Parent (k)$: see above leaf insertion (iii).
    \end{enumerate}     
    
\begin{figure}[th]
\centering
\subcaptionbox{\label{fig:changes-all-cases-a}}{\includegraphics[scale=0.4]{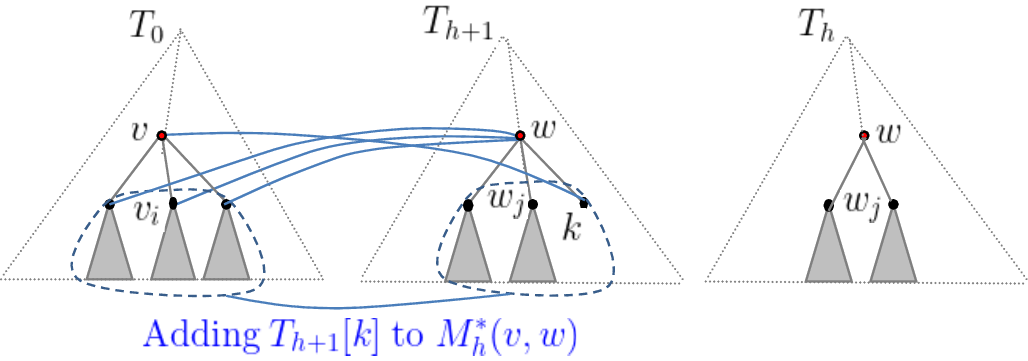}}\hspace{1.2cm}
\subcaptionbox{\label{fig:changes-all-cases-b}}{\includegraphics[scale=0.4]{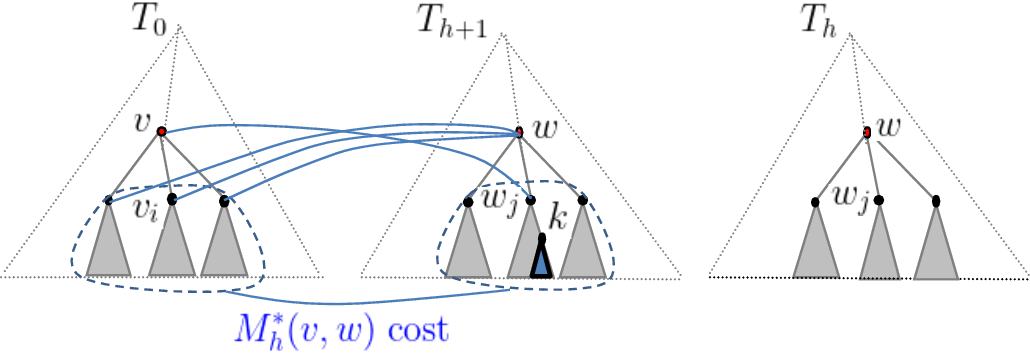}}\\
\subcaptionbox{\label{fig:changes-all-cases-c}}{\includegraphics[scale=0.4]{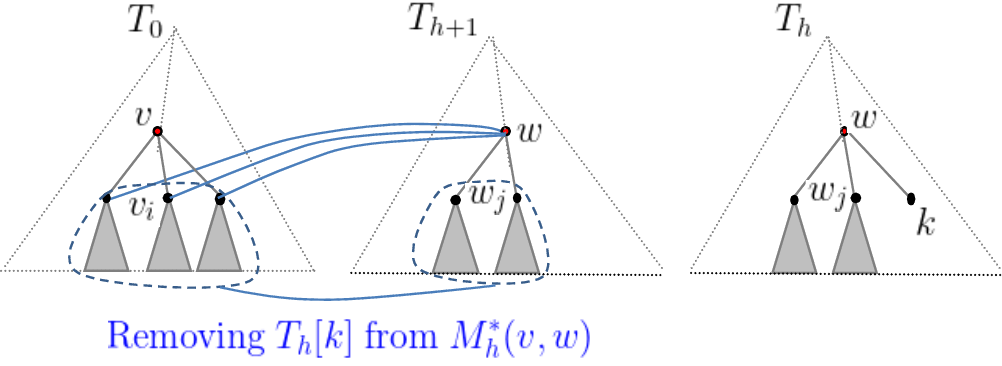}}\hspace{1.2cm}
\phantom{\includegraphics[scale=0.4]{fig6b.pdf}}
\caption{The intermediate distance information -at connected vertices pairs- which are included in $\Dist(v,w)$ computation and changing between step $h$ and step $h+1$ after a vertex $k$ edition. The variation of the $(v,w)$ optimal forest mapping is depicted as follows:
(a) After a leaf $k$ insertion under vertex $w$, $T_{h+1}[k]$ has to be integrated in the previous mapping $M_h^\ast(v,w)$. (b) For any editing operation on $k$, if $w$ is an ancestor of $\Parent(k)$ then the previous mapping $M_h^\ast(v,w)$ has to be updated according to the new costs $\Dist(v_i,w_j)$ for $w_j$ an ancestor of $k$ and for any $v_i$ a child of $v$.
(c) After a leaf $k$ deletion under vertex $w$, the previous mapping $M_h^*(v,w)$ has to be updated by excluding the subtree $T_h[k]$.
 \label{fig:changes-all-cases}}
\end{figure}

\paragraph{Re-evaluation of forest mappings}
One common problem to the 4 different cases enumerated above is linked with the re-evaluation of the forest mapping $M^\ast$ at $h+1$ knowing the forest mapping at $h$. A careful analysis reveals that again, not all the computation needs be redone at step $h+1$. Two main cases arise, depending on whether the edition operation impacts the validity of the forest mapping, i.e. when $w=\Parent(k)$) at step $h$, or not, i.e. when $w$ is an ancestor of $\Parent(k)$). This leads us to solve two different problems:

\begin{enumerate}[label=\Alph*.]
    \item If the mapping at step $h$ remains valid at step $h+1$, it may no longer be optimal. A new optimal valid forest mapping $M^\ast(v,w)$ needs to be computed;
    \item If not, we need to find an initial valid mapping. We will show that such a new valid mapping can efficiently be obtained from the optimal mapping at step $h$, and solve the optimal forest mapping, starting from this updated initial valid mapping.
\end{enumerate}

The search of an optimal forest mapping can be solved as a MCF problem on a bipartite matching graph $G$ representing the two forests \cite{zhang96}. Let us study how the MCF problem can be solved in situation A, then B.

\paragraph{Resolution of problem A} 
In this case, we already know a feasible flow $f_h^*$ on $G$ that corresponds to the optimal solution of the MCF problem at the previous step. However, this flow may have now lost its optimal character due to the cost changes in $G$. Given the limited change in the graph $G$ between two consecutive steps, an obvious heuristics to optimize the search for a new optimal flow $f_{h+1}^*$, is to start the optimization from the previous optimal flow, that is feasible. For this, we chose to use the Network Simplex (NS) algorithm \cite{kiraly-kovacs2012, ahuja2017}, that can be initialized by an arbitrary feasible flow, here chosen as $f_h^*$ at step $h+1$.  

\paragraph{Resolution of problem B} 
Here, edges costs are not changed in the graph $G$, but due to the edit operation (insertion or deletion of node $k$), the right-hand side of the graph $G$ is modified (see Fig.\,\ref{fig:flow path}). This may alter the feasibility of the previous flow $f^*$ as well as its optimality. We must thus compute a new feasible flow (maximal and verifying the node law) from which we then derive an optimal flow. This can be simply done incrementally from the feasible flow $f_h^*$ depending on the different cases illustrated in Fig.\, \ref{fig:flow path}.

\noindent
\underline{Case 1}: $n_k$ is inserted in previous graph $G$

\smallskip

All capacities on the new edges incident to of new node $n_k$ are set to $1$ while their initial flows are set to $0$. Then the flow on edge $(n_k,t)$ is obviously saturated. Two cases must be now considered:

 \begin{itemize}
    \item The node $\theta$ is on the right hand-side (see Fig.\,\ref{fig:flow path-a}). The previous capacity and flow on edge $(\theta, t)$ are decreased by one unit to match the demand value at $t$. Then, we select a node $i$%at random
    , in the left-hand side of the bipartite graph $G$, among nodes for which the previous flow $f_{i \theta}^*$ was saturated and minimizing the cost $\gamma(i,n_k)$, thus maximizing the value of $\gamma(i,\theta)-\gamma(i,n_k)$. And by one unit we increase the flow on edge $(i,n_k)$  while decreasing that on edge $(i,\theta)$, to verify the node law at nodes $i$, $n_k$ and $\theta$. 
    
    \item The node $\theta$ is on the left hand-side (see Fig.\,\ref{fig:flow path-b}). The previous capacity and flow on edge $(s, \theta)$ are increased by one unit as well as the supply and demand values of respectively $s$ and $t$. Then we augment the flow $f_{\theta n_k}^*$ by one unit, to verify the node law at $n_k$ and $\theta$.
 \end{itemize}

\noindent
\underline{Case 2}: $n_k$ is deleted from previous graph $G$

\smallskip

We model this situation by setting the flow from $n_k$ to $t$ and the capacities on all $n_k$ incident edges to $0$. We then consider the unique node $i$ from the left-hand side that was previously connected to $n_k$ with a unit flow. We must decrement the flow on this edge to verify the node law at $n_k$. Two cases must be now considered:
\begin{itemize}
    \item The node $\theta$ is on the right hand-side (see Fig.\,\ref{fig:flow path-c}). The previous capacity and flow on edge $(\theta, t)$ are increased by one unit to match the $t$ demand value. Then, we decrease the flow on edge $(i, \theta)$ to verify the node law at $i$ and $\theta$.
    
    \item The node $\theta$ is on the left hand-side (see Fig.\,\ref{fig:flow path-d}). The previous capacity and flow on edge $(s, \theta)$ are decreased by one unit as well as the supply and demand values of respectively $s$ and $t$. If $i$ is $\theta$ then the algorithm stops. Otherwise, we can pick  a node $j$, in the right-hand side nodes, among those for which the previous flow $f_{\theta j}^*$ was saturated and minimizing the cost $\gamma(i,j)$, thus maximizing the value of $\gamma(\theta,j)-\gamma(i,j)$. By one unit we increase the flow on edge $(i,j)$ while decreasing that on edge $(\theta,j)$  to verify the node law at $i$ and $j$.
\end{itemize}

\begin{figure}[th]
\centering
\subcaptionbox{\label{fig:flow path-a}}{\includegraphics[width=5.5cm]{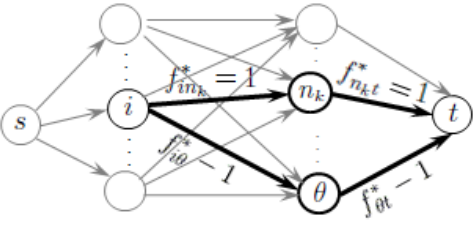}}
\hspace{1.2cm}\subcaptionbox{\label{fig:flow path-b}}{\includegraphics[width=5.5cm]{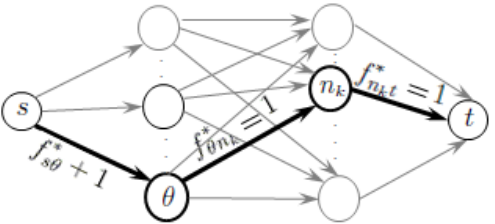}}\\
\subcaptionbox{\label{fig:flow path-c}}{\includegraphics[width=5.5cm]{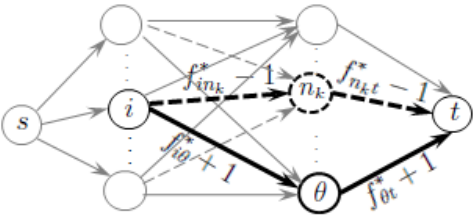}}
\hspace{1.2cm}\subcaptionbox{\label{fig:flow path-d}}{\includegraphics[width=5.5cm]{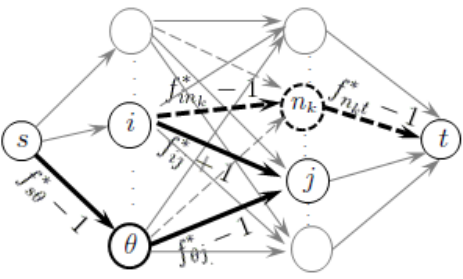}}
\caption{Examples of bipartite matching graphs $G$ (mapping forests $F_0[v]$ and $F_{h+1}[w]$) of optimal flow $f^*$ at step $h$ and which are updated at level of a node $n_k$ at step $h+1$. Bold edges highlight edges on which the flow $f^*$ is updated at step $h+1$ to get a feasible one. Elements in dashed lines represent parts of $G$ at step $h$ to be deleted at step $h+1$. The four examples differ in the $\theta$ node position (depending on the degrees of $v$ and $w$) and on the operation performed on node $n_k$. 
(a) $n_k$ is inserted in $G$ and $\Outdegree(v)>\Outdegree(w)$. (b) $n_k$ is inserted in $G$ and either $\Outdegree(v)<\Outdegree(w)$ or $\Outdegree(v)=\Outdegree(w)$ and $\theta$ has just been inserted in $G$ with incident edges of a null flow value. (c) $n_k$ is about to be deleted from $G$ and either $\Outdegree(v)>\Outdegree(w)$ or $\Outdegree(v)=\Outdegree(w)$ and $\theta$ has just been inserted in $G$ with incident edges of a null flow value. (d) $n_k$ is about to be deleted from $G$ and $\Outdegree(v)<\Outdegree(w)$.
\label{fig:flow path}}
\end{figure}

The resolutions of problems A and B complete the incremental computation of the edit-distance algorithm (a pseudo-code version of this algorithm is provided in Appendix~\ref{pseudo-alg}). We now can derive the overall computational complexity of the new incremental algorithm.

\begin{prop}
Let $T_0$ and $T_{h}$ be two trees and $T_{h+1}$ be a tree obtained from $T_{h}$ using a single edit operation. Knowing the edit-distance between $T_0$ and $T_{h}$ and the related distance information, the complexity of computing incrementally the edit distance between trees $T_0$ and $T_{h+1}$ is in $O(\Size T_0 \times \Height(T_{h+1}) \times \Outdegree(T_{h+1}) \times  \max(\Outdegree(T_{0}), \Outdegree(T_{h+1})))$.
\end{prop}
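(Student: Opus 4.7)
The plan is to split the analysis into (i) counting the pairs $(v,w)$ whose distance information is actually refreshed, and (ii) bounding the work spent at each such pair, the latter being dominated by the incremental re-optimisation of the forest-mapping MCF on the bipartite graph $G_{v,w}$ encoding the children of $v$ and of $w$.

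First I would pinpoint the pairs that need reprocessing. The case analysis preceding the statement establishes that the distance information at $(v,w)$ is left unchanged by the edit except when $w$ is either the edited vertex $k$ itself or one of its ancestors in $T_{h+1}$. These vertices form the root-to-$k$ path, hence number at most $\Height(T_{h+1})+1$. On the other hand $v$ still ranges over the whole of $T_0$, contributing the factor $\Size T_0$. Consequently at most $O(\Size T_0 \cdot \Height(T_{h+1}))$ pairs are touched, which furnishes the first two factors of the claim.

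Next I would bound the work at a single pair. For a fixed pair $(v,w)$, evaluating \eqref{recursive-equations} requires the minima over $\Children(v)$ and $\Children(w)$ at a cost $O(\Outdegree(T_0)+\Outdegree(T_{h+1}))$, together with an updated optimal forest mapping $M^\ast(v,w)$. The matching graph $G_{v,w}$ supporting this mapping has at most $\Outdegree(v)+1 \leq \Outdegree(T_0)+1$ vertices on one side and $\Outdegree(w)+1 \leq \Outdegree(T_{h+1})+1$ on the other, hence carries $O(\Outdegree(T_{h+1}) \cdot \max(\Outdegree(T_0),\Outdegree(T_{h+1})))$ edges. I would then argue that the warm-started MCF update is linear in this size: in problem A only the edge-costs incident to children of $w$ lying on the modified branch of $T_{h+1}$ may change, and the Network Simplex initialised at the previous optimum $f_h^\ast$ reaches a new optimum after a number of pivots proportional to these cost-perturbed edges; in problem B, the explicit feasibility repair depicted in Fig.\,\ref{fig:flow path} touches only $n_k$ and at most one $\theta$-incident edge, and the subsequent re-optimisation is of the same type as problem A.

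The hard part will be precisely this per-pair bound: justifying that a warm-started Network Simplex, after a localised perturbation of a single vertex or of a small batch of edge costs, returns to optimality in time linear in $|E(G_{v,w})|$. For this I would combine a standard sensitivity argument from \cite{kiraly-kovacs2012, ahuja2017}, bounding the number of pivots by the number of edges whose reduced cost can turn negative after the perturbation, with the fact that each pivot on a bipartite graph of the declared size costs $O(|V(G_{v,w})|)$. Once this linear-in-$|E(G_{v,w})|$ per-pair bound is in hand, multiplying it by the number of processed pairs yields the announced complexity $O(\Size T_0 \cdot \Height(T_{h+1}) \cdot \Outdegree(T_{h+1}) \cdot \max(\Outdegree(T_0),\Outdegree(T_{h+1})))$.
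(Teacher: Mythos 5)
Your global decomposition (pairs to be refreshed $\times$ cost per pair) matches the paper's, and your count of refreshed pairs, $O(\Size T_0\times\Height(T_{h+1}))$, is the same. The genuine gap is in your per-pair bound. You need the warm-started re-optimisation of the MCF on $G_{v,w}$ to run in time linear in $\Size E(G_{v,w})$, and you propose to get it by bounding the number of Network Simplex pivots by the number of cost-perturbed edges and charging $O(\Size V(G_{v,w}))$ per pivot. Neither half is a standard sensitivity result: after a pivot the node potentials change, so edges that were not perturbed can acquire negative reduced cost and trigger further (possibly degenerate) pivots, so the pivot count is not controlled by the number of perturbed edges; and a pivot requires selecting an entering arc, which in the implementations cited (\cite{kiraly-kovacs2012, ahuja2017}) costs $O(\Size E(G_{v,w}))=O(\Outdegree(v)\times\Outdegree(w))$ per pivot, not $O(\Size V(G_{v,w}))$. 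What the paper actually proves per pair is weaker: since the edit changes only the costs on the edges entering the single right-hand node $n$ corresponding to the ancestor of $k$, and changes them by exactly one unit, every improving pivot decreases the integer-valued cost by at least $1$, and the total possible decrease is at most the flow value $\max(\Outdegree(v),\Outdegree(w))$; hence at most $\max(\Outdegree(v),\Outdegree(w))$ pivots, each of cost $O(\Outdegree(v)\times\Outdegree(w))$, i.e.\ a per-pair cost $O(\Outdegree(v)\times\Outdegree(w)\times\max(\Outdegree(v),\Outdegree(w)))$.

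This difference is material to your accounting. If you replace your unjustified linear per-pair bound by the justified one above and then multiply by your pair count with the uniform bounds $\Outdegree(v)\leq\Outdegree(T_0)$, $\Outdegree(w)\leq\Outdegree(T_{h+1})$, you obtain $O(\Size T_0\times\Height(T_{h+1})\times\Outdegree(T_0)\times\Outdegree(T_{h+1})\times\max(\Outdegree(T_0),\Outdegree(T_{h+1})))$, which overshoots the statement by a factor $\Outdegree(T_0)$. The paper avoids this not through a stronger per-pair bound but through an amortisation in the summation: it keeps the factor $\Outdegree(v)$ inside the sum and uses $\sum_{v\in T_0}\Outdegree(v)\leq\Size T_0$ together with $\sum_{w\in\Parent^+(k)}\Outdegree(w)\leq\Height(T_{h+1})\times\Outdegree(T_{h+1})$, which yields exactly the announced complexity. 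So either prove your linear-in-$\Size E(G_{v,w})$ warm-start claim (which would be a stronger result than the paper's and needs a real argument, not a citation to generic sensitivity analysis), or switch to the paper's route: weaker per-pair bound plus amortised summation over $v\in T_0$ and $w\in\Parent^+(k)$.
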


\begin{proof}
Altogether, the incremental edit distance algorithm browses the set of vertices in $T_0$ and updates the corresponding distance information for the edited node $k$ and its parents. For this, the algorithm must repeatedly solve problems A and B for all the updated pairs (see Algorithm~\ref{alg:tree-edit-distance} in Appendix~\ref{pseudo-alg}). Let us first compute the complexities of algorithms solving problems A and B as defined above.

\smallskip

The complexity of problem A depends on the worst number of loops performed by the NS algorithm starting with a given initial feasible flow solution (the flow at previous step $f^*$ that is feasible). Each NS loop transforms a feasible flow  solution into another while decreasing its total cost  until it becomes optimal.
In the graph $G$ mapping the forests rooted in $v$ and $w$, let $n$ be the $G$ node belonging to the right hand side partition and representing a subtree rooted in the ancestor vertex of the edited vertex $k$ in tree $T_{h+1}$, and let $(i,n)$ be the unique saturated edge of the flow $f_h^*$ at step $h$. According to the edit operations definition (given in Subsection~\ref{ss:zhang:operations}), the costs on the $n$ incoming edges are either incremented or decremented by one unit at step $h+1$ compared with step $h$. All other costs remain the same. Thus, at step $h+1$, either:

\begin{enumerate}
\item $f_h^\ast$ was the unique minimal solution at step $h$: \begin{enumerate}[label=\alph*.]
    \item if the cost $\gamma(i,n)$ is decremented, it certainly remains minimal;
    \item if it is incremented, $f_h^\ast$ remains minimal but it may no longer be the only one. Therefore, we can still write $f_{h+1}^\ast=f_h^\ast$.  
\end{enumerate}

\item $f_h^\ast$ was not the unique minimal solution at step $h$. Let us denote $f_{h}'^\ast$ another solution for which the flow on the unique edge $(i',n)$ is saturated, then either:  
\begin{enumerate}[label=\alph*.]
\item the cost $\gamma(i,n)$ is decremented, thus $f_h^\ast$ remains minimal and $f_{h+1}^\ast=f_h^\ast$;
\item the cost $\gamma(i,n)$ is incremented and either:
\begin{enumerate}
\item the cost $\gamma(i',n)$ is incremented, thus $f_h^\ast$ remains the minimum and $f_{h+1}^\ast=f_h^\ast$;
\item the cost $\gamma(i',n)$ is decremented, and the flow $f_{h}'^\ast$ is the new optimal solution, $f_{h+1}^\ast=f_{h}'^\ast$.
\end{enumerate}
\end{enumerate}
\end{enumerate}

As a consequence, we notice that for all these situations, the minimum flow remains unchanged between step $h$ and $h+1$, except for the unique case 2.b.ii where the NS algorithm will perform a number of loops to find the flow $f_{h}'^\ast$ from $f_h^\ast$. Each loop swaps a subset of  $f_h^\ast$ saturated flows solutions by an other subset of saturated flows in $f_{h}'^\ast$ while decreasing the total cost on $G$. In the worst case the number of loops is bounded by the supply-demand value on $G$, namely $\max(\Outdegree(v),\Outdegree(w))$. 
According to \cite{kiraly-kovacs2012, ahuja2017}, each loop takes a worst time of $O(\Outdegree(v)\times \Outdegree(w))$.  This gives an over all complexity of problem A of $O(\Outdegree(v)\times \Outdegree(w)\times \max(\Outdegree(v),\Outdegree(w)))$.

\smallskip

For problem B, recovering the flow feasibility on graph $G$ mapping the forests rooted in $v$ and $w$, as we describe below, takes a worst time of $O(\Outdegree(v)+\Outdegree(w))$.

\smallskip

Finally, the total complexity of the incremental edit distance algorithm is obtain by assembling these complexities together following Algorithm \ref{alg:tree-edit-distance}:
\begin{eqnarray*}
&& \sum_{v\in T_0}  O\left(\Outdegree(v)+\Outdegree(\Parent(k))+\sum_{w\in \Parent^+(k)} \Outdegree(v)\times \Outdegree(w)\times \max(\Outdegree(v),\Outdegree(w))\right) \\
&  =&  O\left(\Size T_0 + \Size T_0 \times \Outdegree(T_{h+1})+ \sum_{v\in T_0}\sum_{w\in \Parent^+(k)}\Outdegree(v)\times \Outdegree(w)\times \max(\Outdegree(T_{0}), \Outdegree(T_{h+1}))\right) \\
& =&  O\left(\Size T_0 + \Size T_0 \times \Outdegree(T_{h+1}) +\max(\Outdegree(T_{0}), \Outdegree(T_{h+1})) \times \sum_{v\in T_0} \Outdegree(v) \times \sum_{w\in \Parent^+(k)} \Outdegree(w)\right) \\
&=&  O\left(\Size T_0 + \Size T_0 \times \Outdegree(T_{h+1})+ \max(\Outdegree(T_{0}), \Outdegree(T_{h+1})) \times \sum_{v\in T_0}  \Outdegree(v) \times \Height(T_{h+1}) \times \Outdegree(T_{h+1}) \right)\\
&=&  O\left(\Size T_0 + \Size T_0 \times \Outdegree(T_{h+1})+ \Height(T_{h+1}) \times \Outdegree(T_{h+1}) \times \max(\Outdegree(T_{0}), \Outdegree(T_{h+1})) \times \Size T_0 \right)\\
&=& O(\Size T_0 \times \Height(T_{h+1}) \times \Outdegree(T_{h+1}) \times \max(\Outdegree(T_{0}), \Outdegree(T_{h+1}))),
\end{eqnarray*}
which states the result.
\end{proof}

\section{Simulation study}
\label{s:simu}

The goal of this simulation study is twofold: first, we aim at proving the interest of the incremental algorithm in terms of computation time to evaluate the edit distance; second, we exploit the efficiency of this incremental algorithm to investigate in depth the escape rate of the isotropic random walk on unordered trees from intensive numerical experiments.

\smallskip

We ran all the simulations presented in this section in \verb+Python3+ on a Macbook Pro laptop running OSX High Sierra, with 2.9 GHz Intel Core i7 processors and 16 GB of RAM. The classical and incremental edit distance algorithms, as well as the random walk, have been implemented as a module of the \verb+Python+ library \verb+treex+ \cite{Azais2019}.

\subsection{Computation time improvement}
\label{ss:comptimes}

Starting from two random trees $T$ and $T'$ for which the edit distance has been pre-evaluated, we have computed $\Dist(T,o(T'))$ using both CUTED algorithm and the incremental algorithm developed in this paper, for $o$ picked at random in the set of all possible operations applicable to $T'$, distinguishing the type of operation: add leaf, del leaf, add internal node, or del internal node. The boxplots of computations times required to evaluate $\Dist(T,o(T'))$ are presented in Fig.\,\ref{fig:comptimes:sizes} for different trees sizes. They have been estimated from $1\,000$ independent experiments for each size. The average computation times (over all the possible operations) as well as the average rates of improvement are given in Tab.\,\ref{tab:comptime:t}.

\begin{figure}[th]
\centering\includegraphics[width=14cm]{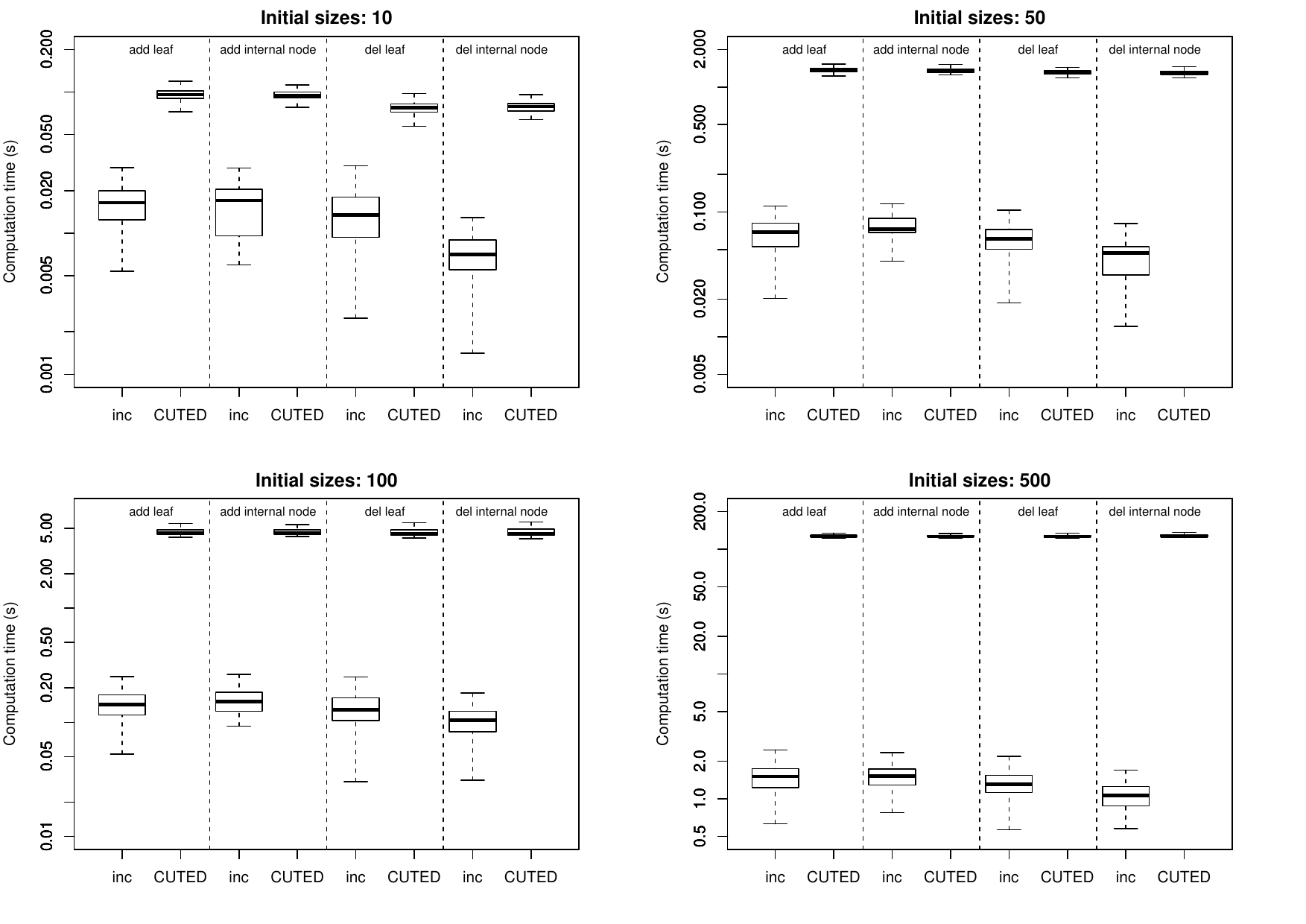}
\caption{Computation times (logarithmic scale) required for computing the edit distance between two trees of different sizes (top left: $10$, top right: $50$, bottom left: $100$, and bottom right: $500$) after editing the second one with one of the 4 possible operations, with the algorithm from the literature (CUTED) and the incremental one developed in this paper. Because of the logarithmic scale, the boxplots from CUTED algorithm seem to be narrower than from the incremental algorithm, while they are actually more scattered.}
\label{fig:comptimes:sizes}
\end{figure}

%%%%%%%%%%%%%%%%%%%%%%%%%%%%%%%%%%%%%%%%%%%%%
\begin{table}[ht]
\centering
\begin{tabular}{c|cccc}
Trees size & 10 & 50 & 100 & 500 \\ \hline
\begin{tabular}{@{}c@{}}Computation time (s) \\
CUTED algorithm
\end{tabular} & 0.08378765 & 1.337326 & 4.90168  & 127.9237 \\ \hline
\begin{tabular}{@{}c@{}}Computation time (s) \\
Incremental algorithm
\end{tabular} & 0.01455773 & 0.06493733 &  0.1449104 & 1.425516 \\ \hline
\begin{tabular}{@{}c@{}}Ratio of computation times \\ ``CUTED/incremental''\end{tabular} & 6.99916 & 23.1193 & 38.00966 & 96.71445 \\ \hline
\end{tabular}
\vspace{0.5cm}
\caption{Average computation time (s) and rate of improvement for the incremental algorithm compared to CUTED algorithm for trees of different sizes. Estimates from $1\,000$ independent replicates. This table compiles the data of Fig.\,\ref{fig:comptimes:sizes}, which present them for the different edit operations.}
\label{tab:comptime:t}
\end{table}
%%%%%%%%%%%%%%%%%%%%%%%%%%%%%%%%%%%%%%%%%%%%%

From Fig.\,\ref{fig:comptimes:sizes}, one can observe that the incremental algorithm is faster than CUTED algorithm, and the larger the trees, the (relatively) faster it is. One also remarks that the computation times are more predictable (in the sense that the boxplots are narrow) for the incremental algorithm. The gain in terms of computation time is already significant for trees of size $10$, since the rate of improvement is of order $7$ (from $15$ to $84$\,ms on average). For trees of size $500$, one sees a gain of almost two orders of magnitude (see Tab.\,\ref{tab:comptime:t}). These numerical results show the great efficiency of our approach. In particular, simulating the trajectory of a random walk while evaluating the end-to-end distance is much faster using the incremental algorithm: for example, $14.9$ (CUTED algorithm) vs. $1.8$\,s (incremental algorithm) for $200$ steps starting from a tree of size $10$. See Tab.\,\ref{tab:comptime:rw} for average computation times of random walk simulations.

%%%%%%%%%%%%%%%%%%%%%%%%%%%%%%%%%%%%%%%%%%%%%
\begin{table}[ht]
\centering
\begin{tabular}{c|cccc}
 & 25 steps & 50 steps & 100 steps & 200 steps \\ \hline
\begin{tabular}{@{}c@{}}Computation time (s) \\
CUTED algorithm
\end{tabular} & 1.591511  &3.199535 &  5.190579 & 14.883755 \\ \hline
\begin{tabular}{@{}c@{}}Computation time (s) \\
Incremental algorithm
\end{tabular} & 0.2946835 & 0.4631565 & 0.7088750 & 1.7541514 \\ \hline
\end{tabular}
\vspace{0.5cm}
\caption{Average computation time (s) to simulate and evaluate the end-to-end distance of a random walk starting from a tree of size $10$. Estimates from $100$ independent replicates.}
\label{tab:comptime:rw}
\end{table}
%%%%%%%%%%%%%%%%%%%%%%%%%%%%%%%%%%%%%%%%%%%%%

\subsection{Sharp estimation of average escape rate}
\label{ss:sharpesc}

Fig.\,\ref{fig:random:traj:av} presents a sample trajectory starting from an initial tree of size $10$ within a time window of $10\,000$ steps. The 5 characteristics introduced in Subsection~\ref{ss:def:trees}, namely the end-to-end distance, the size, the height, the outdegree, and the Strahler number, as well as their average behavior estimated from $1\,000$ replicates, are given. For all of them, one observes a slowing escape towards infinity, which results in concave curves. The aim of this subsection is to estimate, as precisely as possible, the average escape rate of the end-to-end distance thanks to intensive simulations.

\begin{figure}[h]
\centering\includegraphics[width=14cm]{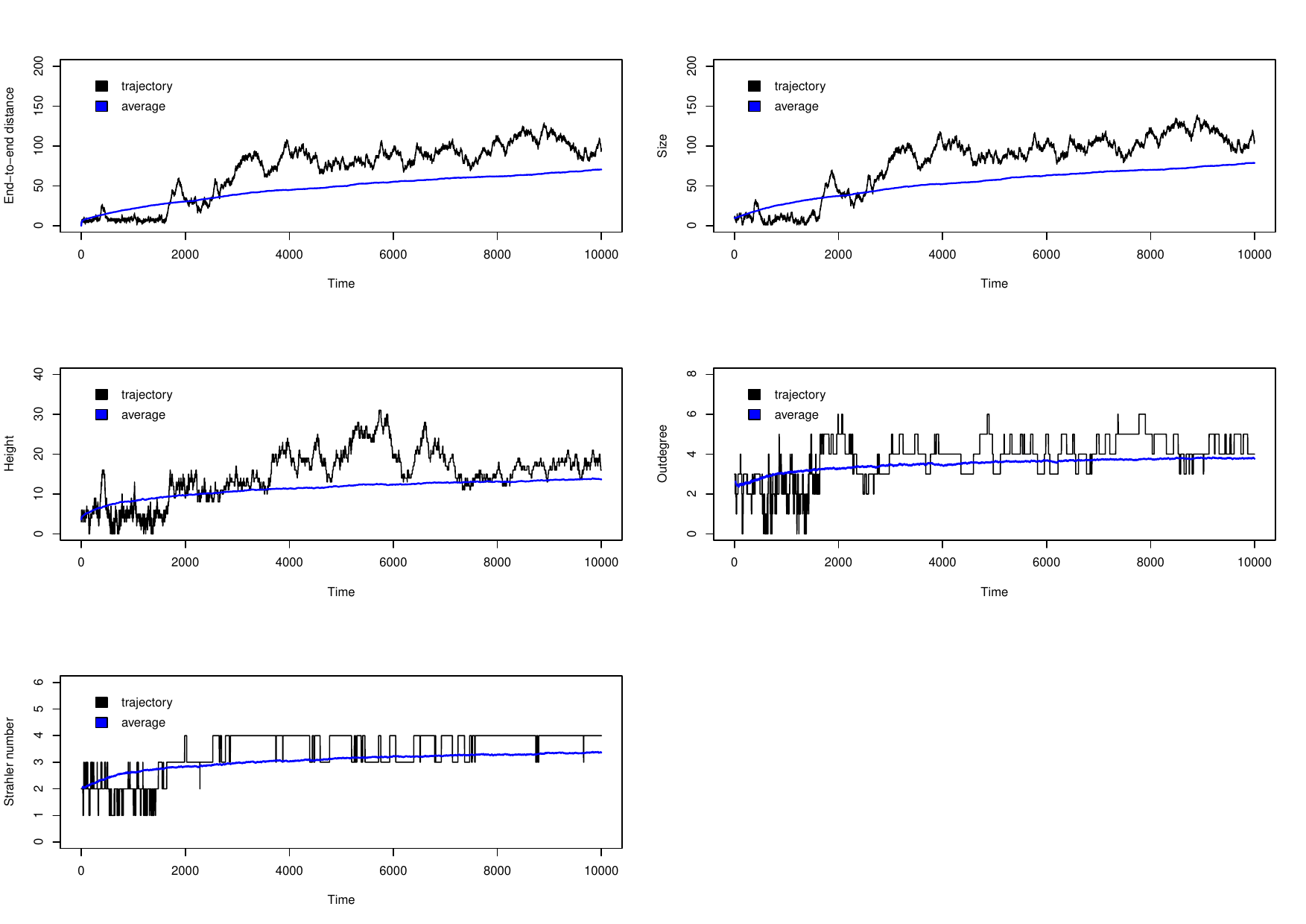}
\caption{5 characteristics of a sample trajectory of the random walk starting from an initial tree of size $10$ up to time $10\,000$ (in black) and average curves (in blue, estimated from $1\,000$ independent replicates): end-to-end distance (top left), size (top right), height (middle left), outdegree (middle right), and Strahler number (bottom left). See Subsection~\ref{ss:def:trees} for definitions.}
\label{fig:random:traj:av}
\end{figure}

\smallskip

The end-to-end distance can be studied through the evolution of the size. Given any trees $T$ and $T'$, the distance between them can be bounded from their sizes as
\begin{equation}\label{eq:size:dist}
    |\Size T'-\Size T| \leq \Dist(T,T') \leq \Size T'+\Size T .
\end{equation}
The upper-bound corresponds to the worst case where all the vertices of $T$ have been deleted before adding all the vertices of $T'$. The lower bound is obtained in the very favorable case where $T'$ can be obtained from $T$ by only adding (or deleting) vertices. From \eqref{eq:size:dist} and given a random walk on trees $(T_h)_{h\geq0}$, we obtain the following relation between the size and end-to-end distance processes,
\begin{equation}\label{eq:rel:size:dist}
    1-\frac{\Size T_0}{\Size T_t} \leq \frac{\Dist(T_0,T_t)}{\Size T_t} \leq 1+\frac{\Size T_0}{\Size T_t}.
\end{equation}
However, it should be noted that the size process $(\Size T_t)_{t\geq0}$ is a simple random walk on integers reflected at the barrier $1$, i.e. a $\mathbb{N}^\ast$-valued process such that,
$$
    \forall\,n\geq2,~\mathbb{P}(\Size T_{h+1} = n+1|\Size T_h = n) = \mathbb{P}(\Size T_{h+1} = n-1|\Size T_h = n) = \frac{1}{2},
$$
and $\mathbb{P}(\Size T_{h+1}=2|\Size T_h = 1) = 1$. The average escape rate of a (reflected in $0$) random walk $(W_h)_{h\geq0}$ has been studied in the literature. Taking $p_1 = 1$ in \cite[eq.\,(1.2)]{Katriel2011AsymptoticBO}, we obtain that, when $h$ goes to infinity, $\mathbb{E}[W_h|W_0 = x] \sim \rho(x,h)$, where
\begin{equation}\label{eq:esc:srw}
    \rho(x,h) = \sqrt{\frac{2h}{\pi}} + \frac{x^2}{\sqrt{2\pi h}}.
\end{equation}
Together with \eqref{eq:rel:size:dist}, and using the fact that $\Size T_h$ can be expressed as $W_h+1$, this proves that, when $h$ goes to infinity,
\begin{equation}\label{eq:escrate}
    \mathbb{E}[\Dist(T_0,T_h) | \Size T_0 = x] \sim \rho(x,h).
\end{equation}
We aim to improve our empirical knowledge of this rate thanks to intensive simulations of random walks, in particular through the study of the dependency on the initial condition $x$.

\smallskip

In order to refine the escape rate \eqref{eq:escrate}, we have performed intensive simulations of random walks, within a time window of $10\,000$ steps repeated $1\,000$ times to achieve accurate average behaviors, starting from $9$ possible initial sizes (from $2$ to $10$ nodes). Consequently, $90\text{M}$ evaluations of distances have been completed to conduct this study.

\smallskip

Fig.\,\ref{fig:av:dist:size} presents the average size and end-to-end distance processes starting from initial trees of different sizes. We remark that the theoretical rate \eqref{eq:esc:srw} is accurate for the size process as expected, but not for the end-to-end distance, in particular when the size of the initial condition $\Size T_0$ increases. It is not surprising in regards to \eqref{eq:rel:size:dist} that introduces an error of order $O(\Size T_0/\Size T_h)$.

\begin{figure}[h]
\centering\includegraphics[width=14cm]{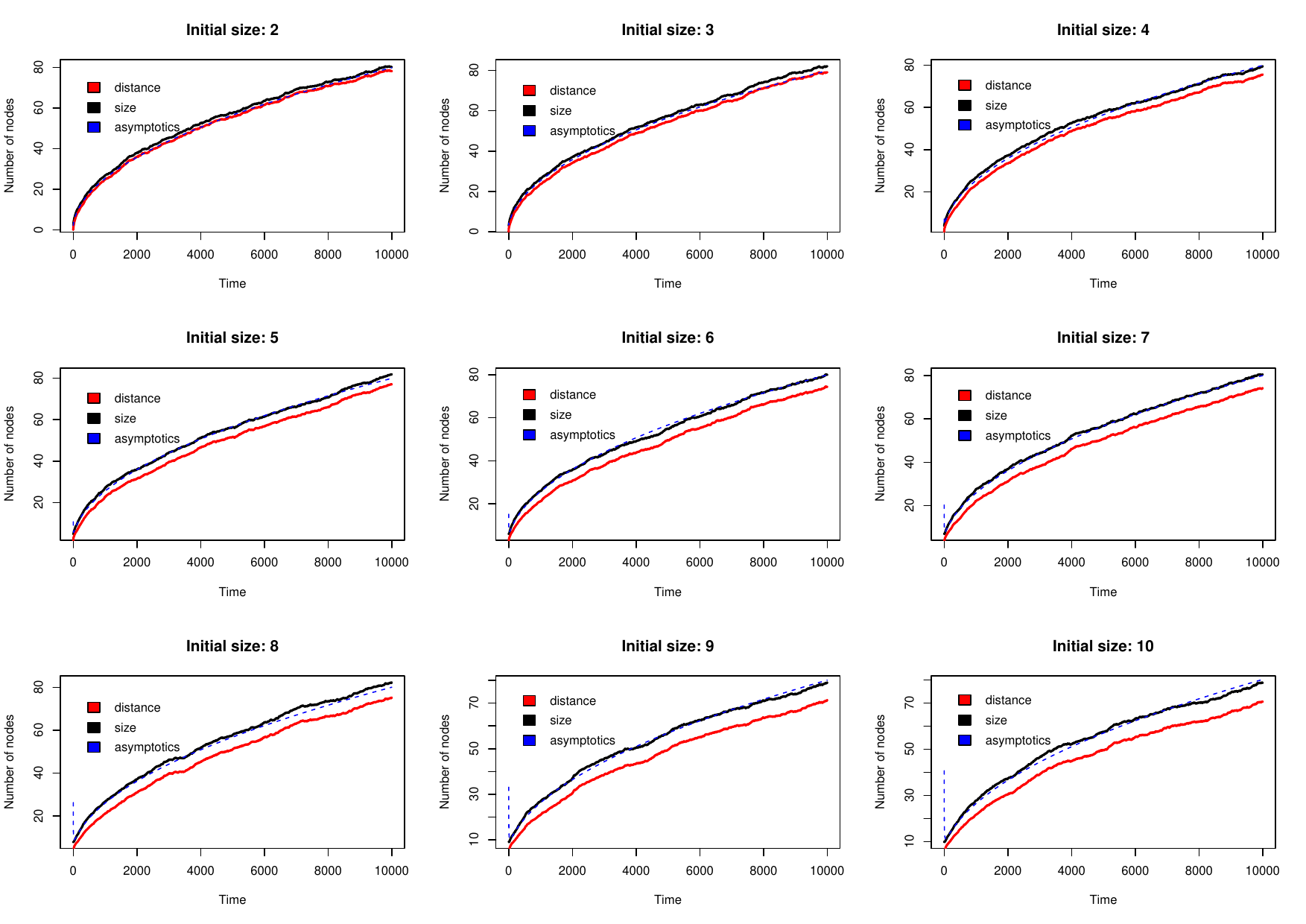}
\caption{Average size process $(\Size T_h)_{h\geq0}$ (in black) and average end-to-end distance process $(\Dist(T_0,T_h))_{h\geq0}$ (in red) starting from trees of size between $2$ and $10$. Each curve has been estimated from $1\,000$ independent replicates of the random walk. The theoretical asymptotics $\rho(\Size T_0,h)$, see \eqref{eq:esc:srw}, of the size process has been added in blue dashed lines.}
\label{fig:av:dist:size}
\end{figure}

Instead of comparing $\Dist(T_0,T_h)$ with $\Size T_h$, we propose in Fig.\,\ref{fig:ratio:dist:size} to investigate the approximation accuracy of the two bounds given in \eqref{eq:size:dist}, i.e. compare $\Dist(T_0,T_h)$ with the lower-bound $\Size T_h-\Size T_0$ and with the upper-bound $\Size T_h+\Size T_0$. The numerical simulations empirically show that $\Size T_h-\Size T_0$ is the best approximation of the the distance process, which tends to prove that the random walk keeps a long-term memory of the starting tree. Indeed and as aforementioned, $\Size T_h-\Size T_0$ is equal to the distance when nodes have been only added to the initial structure, without removing any part of it.

\begin{figure}[ht]
\centering\includegraphics[width=14cm]{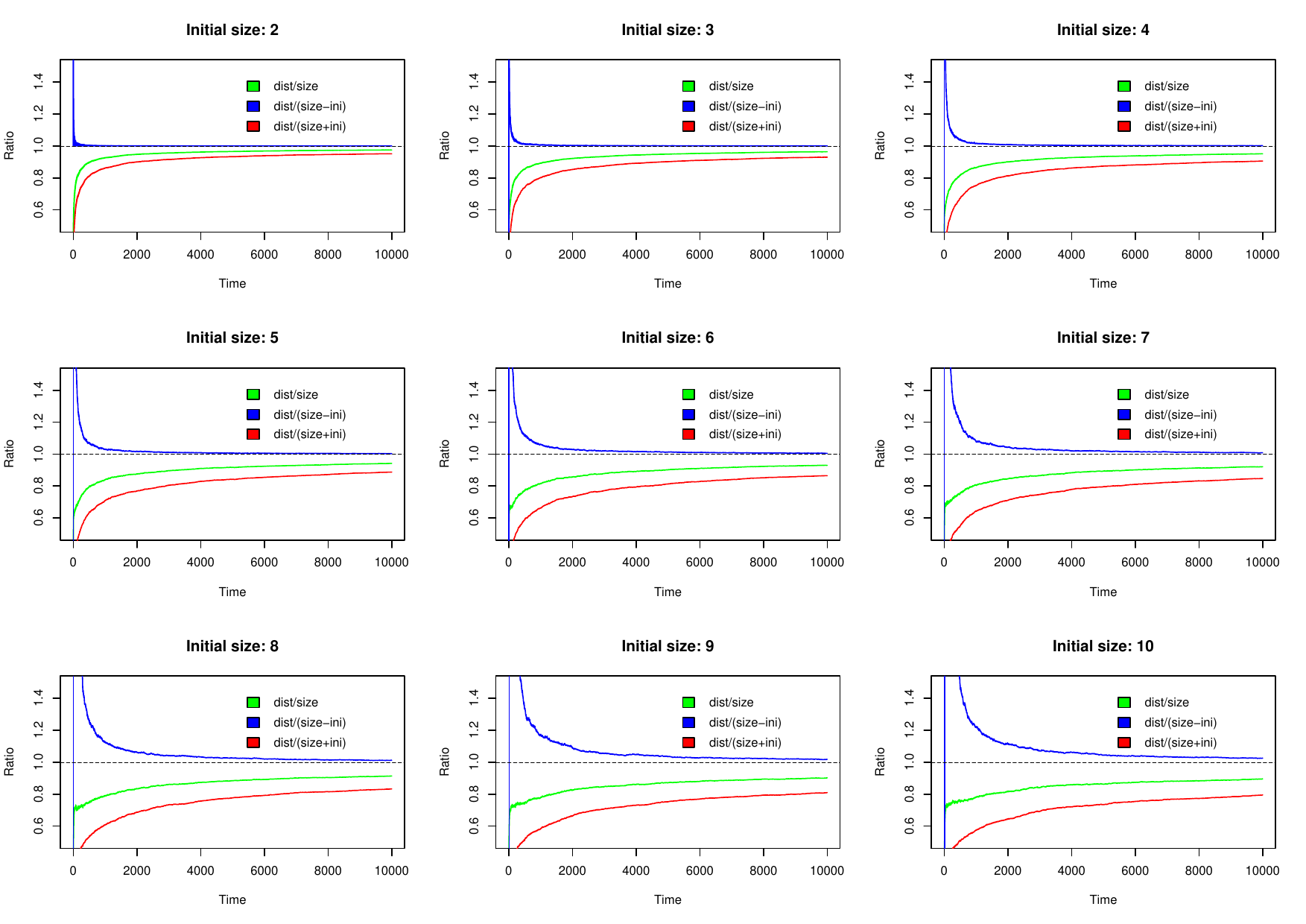}
\caption{Ratio of average end-to-end distance process $(\Dist(T_0,T_h))_{h\geq0}$ over (i) average size process $(\Size T_h)_{h\geq0}$ (in green), (ii) average size process minus initial size $(\Size T_h-\Size T_0)_{h\geq0}$ (in blue), and (iii) average size process plus initial size $(\Size T_h+\Size T_0)_{h\geq0}$ (in red).}
\label{fig:ratio:dist:size}
\end{figure}

\smallskip

We also remark from Fig.\,\ref{fig:ratio:dist:size} that $\Dist(T_0,T_h)$ is less accurately estimated by $\Size T_h-\Size T_0$ when $\Size T_0$ increases, which says that the memory of the starting tree is less preserved when it is larger. We propose to estimate this approximation error through a polynomial fitting presented in Fig.\,\ref{fig:adjust:rate}: (i) at each time step $h$, we observe that the ratio $\mathbb{E}[\Dist(T_0,T_h) | \Size T_0 = x] / (\mathbb{E}[\Size T_h | \Size T_0 = x]-\Size T_0)$ is surprisingly well-captured by a function of the form $1+\alpha_t\Size T_0^2$ (3 examples are given in Fig.\,\ref{fig:adjust:rate} top left, top right, and bottom left); (ii) we show that the regression coefficients $\alpha_h$ behave as $\beta/h^{5/4}$, with $\beta=17.83953$ (see Fig.\,\ref{fig:adjust:rate} bottom right). As a consequence, we obtain this estimated but sharp rate,
$$
    \frac{\mathbb{E}[\Dist(T_0,T_h) | \Size T_0 = x]}{\mathbb{E}[\Size T_h | \Size T_0 = x]-\Size T_0} \sim 1 + \frac{\beta\,\Size T_0^2}{h^{5/4}}.
$$
Finally, together with all of the above, our estimated escape rate of the end-to-end distance process is given by
\begin{equation}\label{eq:est:sharp:rate}
    \mathbb{E}[\Dist(T_0,T_h) | \Size T_0 = x] \sim \sqrt{\frac{2h}{\pi}}-\Size T_0 + \frac{\Size T_0^2}{\sqrt{2\pi h}} + \frac{\beta\, \Size T_0^2}{h^{3/4}}\sqrt{\frac{2}{\pi}} -  \frac{\beta\,\Size T_0^3}{h^{5/4}}
    + \frac{\beta\,\Size T_0^4}{\sqrt{2\pi} \, h^{7/4}} ,
\end{equation}
which accuracy is illustrated in Fig.\,\ref{fig:sharp:escape:rate}.

\begin{figure}[ht]
\centering\includegraphics[width=14cm]{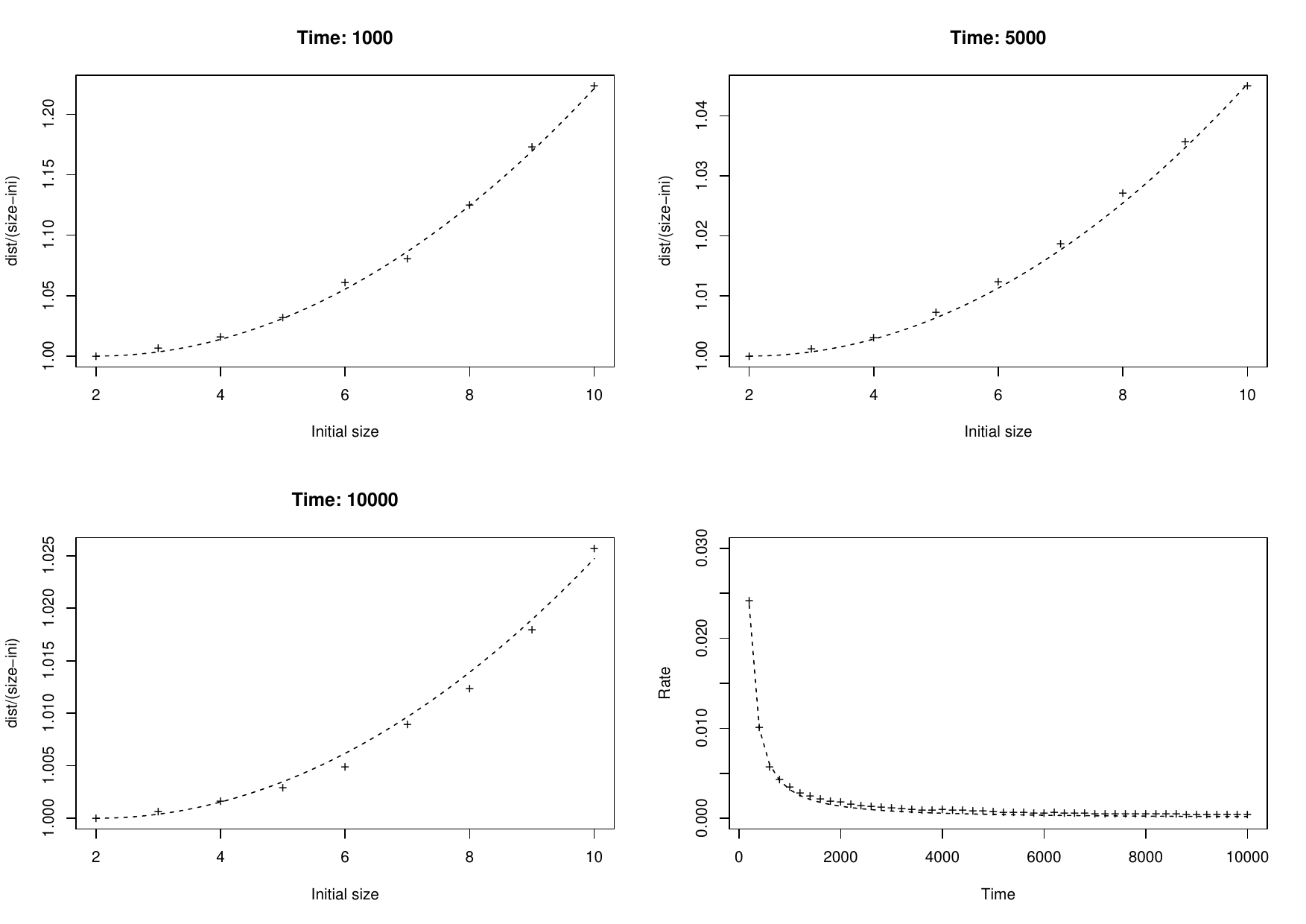}
\caption{Ratio of average end-to-end distance process $(\Dist(T_0,T_h))_{h\geq0}$ over average size process minus initial size $(\Size T_h-\Size T_0)_{h\geq0}$ as a function of the initial tree size (crosses) and their quadratic fitting of the form $1+\alpha_h\Size T_0^2$ (dashed lines), at 3 time steps (top left: $h=1\,000$, top right: $h=5\,000$, and bottom left: $h=10\,000$). Fitting coefficients $\alpha_h$ as a function of time $h$ (crosses) and fitting of the form $\beta/h^{5/4}$ (dashed line) with $\beta = 17.83953$ (bottom right).}
\label{fig:adjust:rate}
\end{figure}

\begin{figure}[ht]
\centering\includegraphics[width=14cm]{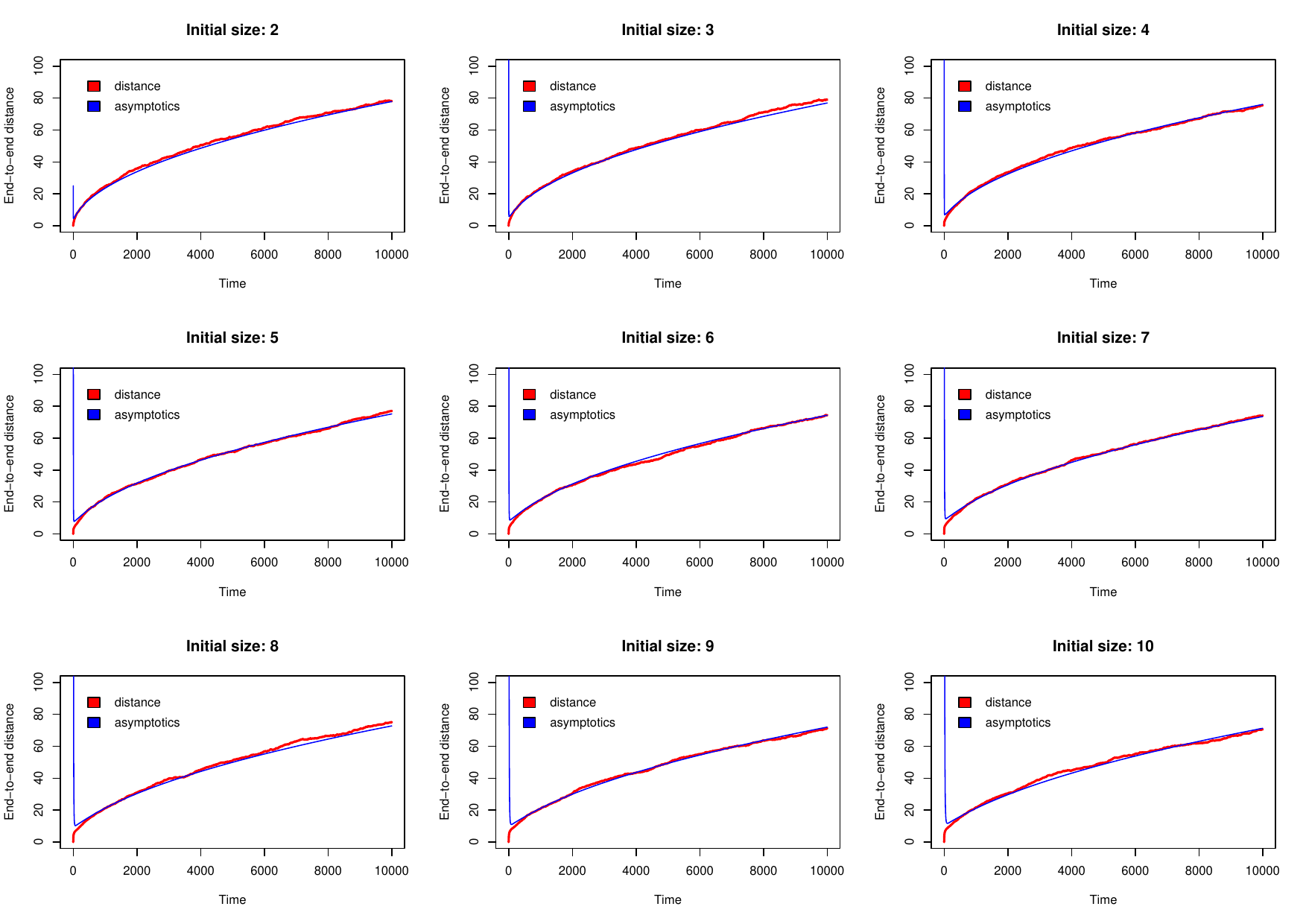}
\caption{Average end-to-end distance process $(\Dist(T_0,T_h))_{h\geq0}$ (in red) starting from trees of size between $2$ and $10$, and estimated asymptotics (in blue) given in \eqref{eq:est:sharp:rate}.}
\label{fig:sharp:escape:rate}
\end{figure}

\section{Discussion}
\label{s:discussion}

In the present paper, we have developed an incremental algorithm to quickly evaluate the edit distance \cite{zhang96} to the origin along a random walk on the space of unordered trees. As an application, a method for accurately estimating the escape rate of the balanced random walk has been derived, taking into account the theoretical asymptotic behaviour of the size process. Other aspects of this work that may be of interest are discussed below.

\subsection{Isotropic random walk}
\label{ss:isotropic}

The random walk considered in this article is balanced in the following sense: from any tree, the probability to reach a bigger neighbour is equal to the probability to reach a smaller neighbour. Another option would be to consider the isotropic random walk, which assigns the same probability to all the feasible edit operations. Incremental CUTED algorithm and the escape rate estimation method could be applied to assess its asymptotic behaviour. As remarked in Subsection~\ref{ss:isorw}, the number of neighbours of size $n+1$ of a tree of size $n$ is in general larger than the number of neighbours of size $n-1$, and the gap tends to increase with $n$. The isotropic random walk is thus expected to escape very fastly. The reflection on the single-node tree is therefore very rare, whereas it is an intriguing property of the balanced random walk, which in particular links it to the random walk on $\mathbb{N}^\ast$ as explained in Subsection~\ref{ss:sharpesc}.

\subsection{Applicability of incremental CUTED}
\label{ss:applicability}

The incremental CUTED algorithm can be applied in all problems where one seeks to evaluate the edit distance to a reference along a trajectory on the space of trees made of elementary moves. This is the case, for instance, when one aims to solve optimization problems by heuristic algorithms such as simulated annealing \cite{sa}. Two examples are provided below.

\smallskip

One typical problem in statistics of tree data is to estimate the barycenter of a set of $n$ trees $\{T_i\}_{1\leq i\leq n}$, i.e. the minimum of the function $\Delta:\theta\mapsto \sum_{i=1}^n \Dist(\theta,T_i)$. Simulated annealing consists in building a Markov chain $(\theta_k)_{k\geq0}$ that converges towards the targeted optimum in the following way:
\begin{itemize}
\item From a tree $\theta_k$, select a random feasible edit operation: it defines a new tree $\widehat{\theta}_{k+1}$ ;
\item Evaluate the difference $d = \Delta(\widehat{\theta}_{k+1})-\Delta(\theta_k)$ ;
\item Accept $\widehat{\theta}_{k+1}$ with probability $p = \min(1,\exp(-\beta_k d))$), i.e.
$$\theta_{k+1} = \left\{\begin{array}{cl}
\widehat{\theta}_{k+1} & \text{with probability $p$,}\\
\theta_k & \text{with probability $1-p$,}
\end{array}\right.$$
where the sequence $(\beta_k)_{k\geq0}$ increases slowly.
\end{itemize}
The edit distance to the $n$ reference trees must be evaluated at each step of the trajectory, which can be efficiently performed by the incremental CUTED algorithm.

\smallskip

A similar problem occurs in lossy compression of trees. Self-nested trees are unordered trees which all subtrees of the same height are isomorphic. Because of the high redundancy of their subtrees, they achieve remarkable compression properties \cite{selfnested}. Lossy compression of an unordered tree $T$ can be performed through its approximation by a self-nested tree. In other words, one seeks the minimum of the function $\Delta:\theta\mapsto\Dist(\theta,T)$ on the subspace of self-nested trees. Different strategies have been developed in recent years \cite{selfnested,azais:hal-01584078,azais:hal-01294013} but the problem in its generality remains open. The applicability of heuristics like simulated annealing highly depends on the computation time required to evaluate the edit distance along the trajectory, which shows again the interest of the incremental CUTED algorithm.

\subsection{Labeled trees}

The method and algorithms developed in this paper only deal with topological trees, without labels on their nodes. They are of great interest both for fundamental questions in mathematics and computer science they raise and in the applications (see Subsection~\ref{ss:applicability}). Furthermore, in applications of labeled trees, topology and labels can be handled simultaneously (for instance through an edit distance on labeled trees) or separately, by considering first the topology then the labels. In the second case, topology can be taken into account through an edit distance on unlabeled trees.

\smallskip

While the space of topological trees inherits a canonical graph structure (and thus a canonical random walk), this is not the case for labeled trees, where both size and connectivity depend strongly on the label space. Choosing a finite set, $\mathbb{N}$, $\mathbb{Z}$ or $\mathbb{Z}^p$ for the set of labels induces therefore very different random walks. In addition, the elementary operation that adds a node to a tree is no longer canonical because its label has to be selected from a specified distribution that will have an effect on the long-term behaviour. For instance, if the set of labels is $\mathbb{N}$, the reflection on the barrier $0$ will occur more often if the initial label is small. Finally, the escape rate can be expected to grow rapidly with the size of the label space. The escape rate of the random walk on topological trees, estimated in this paper, forms the baseline of the behaviour of random walks on labeled trees.

\smallskip

This being stated, the incremental CUTED algorithm derived in the paper can be adapted to take labels into account. For labeled trees, one can consider the following 5 elementary editing operations:
\begin{itemize}
\item Add a leaf with a given label;
\item Remove a leaf;
\item Add an internal node with a given label;
\item Remove an internal node;
\item Substitute the label of a node.
\end{itemize}
For the first 4 operations, the only difference with unlabeled trees (see Subsection~\ref{ss:zhang:operations}) is that the label of the new node has to be specified. The incremental computation of the distance proceeds in the same way as for unlabeled trees, where the computational process requires the constrained edit distance between labeled trees \cite{zhang96}. For the last editing operation, the architecture of the new tree $T_{h+1}$ remains unchanged compared with $T_h$: only the label of the impacted node $k$ is modified. This implies changes in the mapping cost of $M^\ast_h(v,w)$ for all nodes $w$ of $T_{h+1}$ along the ancestors of $k$ (as in Fig.\,\ref{fig:changes-all-cases-b}). For this case, according to the new costs, we need to solve Problem A to compute the new forest mapping $M^\ast_{h+1}(v,w)$ starting from the mapping $M^\ast_h(v,w)$, for $w$ in $\{k\}\cup\Parent^+(k)$ and for all $v$ in the reference tree $T_0$.

\section*{Acknowledgments}
The authors would like to thank an anonymous reviewer for his/her valuable comments that helped them to significantly improve the first version of the paper.

{\small
\bibliographystyle{apalike}  
\bibliography{references}
}

\bigskip

\appendix

\section{Pseudo-code of the incremental edit distance algorithm} \label{pseudo-alg}

\begin{algorithm}
\SetKwBlock{Function}{Function incremental edit distance}{end}
\Function{
\SetKwInOut{Input}{Input}
\SetKwInOut{Output}{Output}
\DontPrintSemicolon
\Input{\;
Two unordered rooted trees $T_{0}$ and $T_{h+1}$\; 
The distance information computed for $\delta(T_0,T_h)$\; 
The edited vertex $k$ in $T_{h+1}$}
\Output{\;
The new distance information computed for $\delta(T_0,T_{h+1})$.}
\BlankLine

\For {$v \in T_0$ }{
 Compute the distance information at $(v,k)$ according to the type of edit operation on $k$ (see Subsection \ref{ss:inczhang})\;
 Update the distance information at $(v,\Parent(k))$:\;
 $\quad$ solve problem B (to restore a valid forest mapping between $F_{h+1}[v]$ and $F_{h+1}[\Parent(k)]$)\;
 $\quad$ solve problem A (to restore an optimal forest mapping between $F_{h+1}[v]$ and $F_{h+1}[\Parent(k)]$)\;
\For {$w \in \Parent^+(\Parent(k))$}{
 Update the distance information at $(v,w)$: \;
 $\quad$ solve problem A to restore an optimal forest mapping between $F_{h+1}[v]$ and $F_{h+1}[w]$)
}
}
\Return The distance information of $\delta(T_0,T_{h+1})$\;
}
\caption{The incremental computation of edit distance between two trees $T_{0}$ and $T_{h+1}$. \label{alg:tree-edit-distance}}
\end{algorithm}

\end{document}